\newtheorem{proposition}{Proposition}
\newtheorem{definition}{Definition}
\newtheorem{corollary}{Corollary}
\newtheorem{lemma}{Lemma}
\begin{document}


\title{Quantum XYZ cyclic codes for biased noise}


\author{Zhipeng Liang} 
\affiliation{Harbin Institute of Technology, Shenzhen. Shenzhen, 518055, China}
\author{Fusheng Yang}
\affiliation{Harbin Institute of Technology, Shenzhen. Shenzhen, 518055, China}
\author{Zhengzhong Yi}
\email[]{zhengzhongyi@cs.hitsz.edu.cn} 
\affiliation{Hefei National Research Center for Physical Sciences at the Microscale and School of Physical Sciences, University of Science and Technology of China, Hefei 230026, China}
\affiliation{Shanghai Research Center for Quantum Science and CAS Center for Excellence in Quantum Information and Quantum Physics, University of Science and Technology of China, Shanghai 201315, China}
\affiliation{Hefei National Laboratory, University of Science and Technology of China, Hefei 230088, China}
\author{Xuan Wang}
\email[]{wangxuan@cs.hitsz.edu.cn} 
\affiliation{Harbin Institute of Technology, Shenzhen. Shenzhen, 518055, China}


\date{\today}

\begin{abstract}
In some quantum computing architectures, Pauli noise is highly biased. Tailoring Quantum error-correcting codes to the biased noise may benefit reducing the physical qubit overhead without reducing the logical error rate.  In this paper, we propose a family of quantum XYZ cyclic codes, which are the only one family of quantum cyclic codes with code distance increasing with code length to our best knowledge and have good error-correcting performance against biased noise. Our simulation results show that the quantum XYZ cyclic codes have $50\%$ code-capacity thresholds for all three types of pure Pauli noise and around $13\%$ code-capacity threshold for depolarizing noise. In the finite-bias regime, when the noise is biased towards Pauli $Z$ errors with noise bias ratios $\eta_Z=1000$, the corresponding code-capacity threshold is around $49\%$. Besides, we show that to reach the same code distance, the physical qubit overhead of XYZ cyclic code is much less than that of the XZZX surface code.
\end{abstract}


\maketitle

\section{Introduction}
\label{introduction}
Quantum error correcting codes (QECCs)\cite{PhysRevA.52.R2493,PhysRevLett.77.793}often ask for high physical qubit overhead[refs], due to the requirement for correcting all Pauli errors with equal capability. However, in some quantum computing architectures, Pauli noise is highly biased, including superconducting qubits\cite{Aliferis_2009,lescanne2020exponential,chamberland2022building}, quantum dots\cite{shulman2012demonstration} and trap ions\cite{nigg2014quantum}. In these situations, there is no need to ensure the QECCs can correct all Pauli errors with equal capability. Tailoring QECCs to the biased noise, which means making them have strong capability to correct one or two Pauli errors rather than to all of the Pauli errors, may benefit reducing the physical qubit overhead while maintaining low logical error rate.

In recent years, more and more researchers have realized this. For instance, the XZZX surface codes\cite{bonilla2021xzzx} have exhibited remarkably high threshold for all three types of Pauli biased noise. Quantum XZZX cyclic codes proposed in Ref. \cite{xu2023tailored} also exhibit strong error-correcting performance against Pauli $Z$-biased noise. However, if the code distance of XZZX cyclic codes can increase with code length has not been addressed.

So far, to our best knowledge, there is no quantum cyclic code whose code distance has shown or has been proved to have the characteristic of increasing with code length. Our motivation is to design the quantum cyclic codes which have this characteristic and strong error-correcting capability against biased noise.



In this paper, we propose a family of quantum XYZ cyclic codes encoding one logical qubit,  whose code distance increases with its code length and stabilizer generators are weight-six XZYYZX-type. First, we prove that by choosing appropriate parameters, the quantum XYZ cyclic codes have repetition code structure under three types of pure Pauli noise. Second, through theoretical analysis and Monte Carlo method proposed in\cite{liang2024determining,bravyi2024high}, we show that the upper bound of code distance of the quantum XYZ cyclic codes increase with code length. Finally, we exploit fully decoupled belief propagation combined with ordered statistics decoding (FDBP-OSD)\cite{yi2023improved} to explore the error-correcting performance of quantum XYZ cyclic codes with the assumption of perfect syndrome measurement. Our simulation results show that the quantum XYZ cyclic codes have $50\%$ code-capacity thresholds for three types of pure Pauli noise and $13\%$ code-capacity thresholds for depolarizing noise. In the finite-bias regime, we consider the noise biased towards Pauli $Z$ errors with noise bias ratio $\eta_Z=1000$, the reason is that in some quantum computing architectures, Pauli $Z$ errors is stronger than all other types of Pauli errors by a factor of nearly $10^3$\cite{Aliferis_2009}. The quantum XYZ cyclic codes show around $49\%$ code-capacity threshold for such noise. Besides, we also compare the physical qubit overhead between the quantum XYZ cyclic codes and the XZZX surface codes. Our results show that to reach the same code distance, the physical qubit overhead of quantum XYZ cyclic codes is much less than that of the XZZX surface code.

This paper is organized as follows. Sect. \ref{Preliminaries} introduces some preliminaries, including ring of circulant, quantum stabilizer codes, quantum cyclic codes, and the Monte Carlo method of determining the upper bound of code distance. In Sect. \ref{Quantum XYZ cyclic codes}, we introduce the quantum XYZ cyclic code construction and study its code dimension and code distance. In Sect. \ref{simulation results}, we explore their error-correcting performance against different noise model. In Sect. \ref{conclusion}, we conclude our work.

\section{Preliminaries}
\label{Preliminaries}
\subsection{Ring of circulants}
\label{Ring of circulants}
For an $l\times l$ circulant matrix $A$ over finite field $\mathbb{F}_{q}$ as follows,
\begin{equation}
A=\left(\begin{array}{cccc}
	a_{0} & a_{1} & \cdots & a_{l-1} \\
	a_{l-1} & a_{0} & \cdots & a_{l-2} \\
	\vdots & \vdots & \ddots & \vdots \\
	a_{1} & a_{2} & \cdots & a_{0}
\end{array}\right)\label{equal1}
\end{equation}
where $a_{0}, a_{1}, \cdots, a_{l-1} \in \mathbb{F}_{q}$, it can be represented as
\begin{equation}
A=a_{0}I+a_{1}J+\cdots a_{l-1}J^{l-1}\label{equal2}
\end{equation}
where $I$ is the $l\times l$ identity matrix and
\begin{equation}
	J={\left(\begin{array}{l l l l l l}{0}&{1}&{0}&{0}&{\cdots}&{0}\\ {0}&{0}&{1}&{0}&{\cdots}&{0}\\{0}&{0}&{0}&{1}&{\cdots}&{0}\\ {\vdots}&{\vdots}&{\vdots}&{\vdots}&{\ddots}&{\vdots}\\ {0}&{0}&{0}&{0}&{\cdots}&{1}\\ {1}&{0}&{0}&{0}&{\cdots}&{0}\end{array}\right)}\label{equal3}
\end{equation}
is the $l\times l$ permutation matrix, which represents the right cyclic shift by one position. It can be verified $J^l=I$, thus the ring of all $l\times l$ circulant matrices over $\mathbb{F}_{q}$ is isomorphic to the ring $\mathbb{F}_{q}^{\langle l\rangle}={\mathrm{\mathbb{F}}}_{q}[x]/(x^{l}+1)$ of polynomials over $\mathbb{F}_{q}$ modulo the polynomial $x^l+1$. Therefore, the circulant matrix $A$ can be uniquely represented by the polynomial $a(x)=a^{0}+\ a^{1}x+\ \cdots\,+a^{l-1}x^{l-1}$.
\subsection{Quantum stabilizer code}
\label{Quantum stabilizer code}
Quantum stabilizer code (QSC)\cite{gottesman1997stabilizer} are the analogue of classical
linear codes in quantum information field. The code space $\mathcal{Q}_{C}$ of an $\left[[n,k,d]\right]$ QSC $C$ is a $2^k$ -dimensional subspace of the Hilbert space ${\mathcal{H}}_{2}^{\otimes n}$, which is stabilized by an Abelian stabilizer group ${\mathcal{S}}\in{\mathcal{G}}_{n}$, where ${\mathcal{G}}_{n}$ is the $n$-fold tensor product of single-qubit Pauli group $\mathcal{G}_1 =\{\pm I,\ \pm iI,\ \pm X,\ \pm iX,\ \pm Y,\ \pm iY,\ \pm Z,\ \pm iZ\}$. More precisely,
\begin{equation}
	\label{equal4}
	\mathcal{Q}_{C}=\{\ket{\varphi}\in{\mathcal{H}}_{2}^{\otimes n}:S\ket{\varphi}= \ket{\varphi},\forall S\in{\mathcal{S}}\}
\end{equation}
The stabilizer group $\mathcal{S}$ can be generated by $k$ independent Pauli operators on $n$ qubits $S_1,\cdots,S_{n-k}\in\mathcal{G}_n$, namely, $\mathcal{S} = \langle S_1, \cdots, S_{n-k} \rangle$. Giving a set of stabilizer generators $S_1,\cdots,S_{n-k}$ of code $C$ is equivalent to explicitly giving the code space $\mathcal{Q}_C$.

Intuitively, the Pauli  operators can be mapped onto finite field $\mathbb{F}_{4}$, namely,
\begin{equation}
\label{equal5}
I\rightarrow0,X\rightarrow1,Z\rightarrow\omega,Y\rightarrow\overline{{{\omega}}}
\end{equation}
Besides, these four operators can also be mapped onto $\mathbb{F}_{2}$, namely,
\begin{equation}
\label{equal6}
I\rightarrow(0,0),X\rightarrow(1,0),Z\rightarrow(0,1),Y\rightarrow(1,1)
\end{equation}
In this way, any operator $E\in\mathcal{G}_n$ acting on $n$ qubits can be represented as a binary vector $\mathbf{e} = (\textbf{e}_x, \textbf{e}_z)$. Based on this $Pauli-to-\mathbb{F}_{2}$ isomorphism, the binary parity-check matrix $H$ of a $\left[\left[n,k\right]\right]$ QSC is a block matrix with dimension $\left(n-k\right)\times2n$, which consists of two $\left(n-k\right)\times n$ binary matrices $H_x$ and $H_z$, namely,
\begin{equation}
\label{equal7}
H=\left(H_x\mid H_z\right)
\end{equation}

The weight of an operator $P\in\mathcal{G}_n$ is defined as the number of qubits on which it acts nontrivially, and we use notation $wt\left(\cdot\right)$ to denote it. For instance, $wt\left(I_1X_2Y_3Z_4\right)=3$.

The logical operators of a QSC $C$ are the set of operators in $\mathcal{G}_n$ which commute with all elements in $\mathcal{S}$ but are not in $\mathcal{S}$. More precisely, the logical operators are the elements of $\mathcal{C}\left(\mathcal{S}\right)/\mathcal{S}$, where $\mathcal{C}\left(\mathcal{S}\right)$ is the centralizer of $\mathcal{S}$ defined as $\mathcal{C}\left(\mathcal{S}\right)=P\in\mathcal{G}_n:SP=PS,\forall S\in\mathcal{S}$. For a $\left[\left[n,k,d\right]\right]$ QSC, we can find $k$ pairs of logical operators $\left({\bar{X}}_j,{\bar{Z}}_j\right)_{j=1,\cdots,k}$ such that ${\bar{X}}_i{\bar{Z}}_j=\left(-1\right)^{\delta_{ij}}{\bar{Z}}_j{\bar{X}}_i$, where $\delta$ is the Kronecker delta, which means for the same pair of logical operators ${\bar{X}}_j,{\bar{Z}}_j$, they are anti-commute, but they commute with other pairs of logical operators. We can see that $\mathcal{C}\left(\mathcal{S}\right)=\left\langle S_1,\cdots,S_{\left(n-k\right)},{\bar{X}}_1,{\bar{Z}}_1,\cdots,{\bar{X}}_k,{\bar{Z}}_k\right\rangle$. The code distance $d$ is defined as the minimum weight of logical operators, namely,
\begin{equation}
\label{equal8}
d = \mathop{\min}_{L \in \mathcal{C(\mathcal{S})} \setminus \mathcal{S}} wt(L)
\end{equation}
\subsection{Quantum cyclic codes}
\label{Quantum cyclic codes}
The stabilizer generators of a quantum cyclic code can be generated cyclically by a given stabilizer generator. For instance, $\left[\left[5,1,3\right]\right]$ quantum code\cite{laflamme1996perfect} is the shortest quantum cyclic code with code distance $d=3$. As shown in Eq. (\ref{equal9}), its stabilizer generators can be obtained by cyclic permutations of $IXZZX$.
\begin{equation}
\label{equal9}
\mathcal{H}={\left(\begin{matrix}I&X&Z&Z&X\\ X&I&X&Z&Z\\Z&X&I&X&Z\\ Z&Z&X&I&X\\ X&Z&Z&X&I\end{matrix}\right)}
\end{equation}
It is easy to verify that the number of independent stabilizer generators in Eq. (\ref{equal9}) is $4$. Besides, $\left[\left[5,1,3\right]\right]$ code is also the shortest code with the effective distance $d_z = 5$ under pure Pauli $Z$ noise, since it exhibits a repetition-code structure for pure Pauli $Z$ noise. Thus, intuitively some quantum cyclic codes might have strong error-correcting performance against certain biased noise. Thus Ref. \cite{xu2023tailored} generalizes the $\left[\left[5,1,3\right]\right]$ code and propose a family of $XZZX$ cyclic codes. However, whether the code distance of this family of $XZZX$ cyclic codes can increase with code length has not been addressed.

For a quantum cyclic code, it can be seen that its parity-check matrix $H_{\mathbb{F}_{4}}$ over $\mathbb{F}_{4}$ is a circulant matrix over $\mathbb{F}_{4}$, while its parity-check matrix $H_{\mathbb{F}_{2}}=\left(H_x\mid H_z\right)$ over $\mathbb{F}_{2}$ is a bipartite circulant matrix, namely, $H_x$ and $H_z$ are both circulant matrices over $\mathbb{F}_{2}$.

In this paper, we will use the parity-check matrix over $\mathbb{F}_{4}$ or $\mathbb{F}_{2}$ alternately depending on circumstances.

\subsection{The Monte Carlo method to determine the upper bound of code distance }
\label{Determining the code distance by Monte Carlo method}
As mentioned in Sect. \ref{Quantum stabilizer code}, the code distance $d$ of a QSC $C$ is defined as the minimum weight of all logical operators. In most cases, for a QSC, we only have its logical operators which are not in the minimum-weight form. To identify the minimum-weight form of its logical operators, one can exploit linear programming method\cite{landahl2011fault} or exhaustive search, but the time complexity of these two methods is exponential\cite{landahl2011fault}. Moreover, it has been proved theoretically that exactly or approximately computing the code distance of QSCs is an NP-complete problem\cite{kapshikar2023hardness,kapshikar2022diagonal}. However, the upper bound of code distance can be determined by more efficient methods, such as Monte Carlo method\cite{rubinstein2016simulation}.

To explicate how to exploit Monte Carlo method to determine the upper bound of code distance, we take a QSC $C$ encoding one logical qubit as an example. The stabilizer group and parity-check matrix of $C$ are denoted by $\mathcal{S}$ and $H$ respectively, and we have its three logical operators $X_L$, $Z_L$ and $Y_L$ which are not in the minimum-weight form. Since any logical operator must commute with all stabilizer generators and anti-commute with another two logical operators, one can solve three decoding problems with a same syndrome $\left(0,\ 0,\ \cdots,0,\ 1,\ 1\right)^T$ and three parity-check matrices ${\left(\begin{matrix}H\\\gamma_Z \\\gamma_Y\end{matrix}\right)}$, ${\left(\begin{matrix}H\\\gamma_X \\\gamma_Y\end{matrix}\right)}$ and ${\left(\begin{matrix}H\\\gamma_X \\\gamma_Z\end{matrix}\right)}$ to obtain three logical operator ${\hat{X}}_L$, ${\hat{Z}}_L$ and ${\hat{Y}}_L$, where $\gamma_X\in X_L\mathcal{S}$, $\gamma_Z\in Z_L\mathcal{S}$ and $\gamma_Y\in Y_L\mathcal{S}$, respectively. If the decoder that is used follows the idea of maximum likelihood decoding, ${\hat{X}}_L$, ${\hat{Z}}_L$ and $ {\hat{Y}}_L$ will be in the minimum-weight form with high probability. Besides, one can set the number trials $T\ \gg\ 1$ and select $\gamma_X^1,\cdots,\gamma_X^T\in X_L\mathcal{S}$,  $\gamma_Z^1,\cdots,\gamma_Z^T\in Z_L\mathcal{S}$ and $\gamma_Y^1,\cdots,\gamma_Y^T\in Y_L\mathcal{S}$ uniformly at random. Then $d_{upper} = \mathop{\min}_{i=1,\ldots,T} \left\{ wt({\hat{X}}_L^i), wt({\hat{Z}}_L^i), wt({\hat{Y}}_L^i) \right\}$ is an upper bound of the code distance which can be systematically improved by increasing the number of trials $T$. 

In this paper, we study the upper bound of the code distance of quantum XYZ cyclic codes encoding one logical qubit through Monte Carlo method proposed in Ref. \cite{liang2024determining,bravyi2024high}, and the decoder we use is FDBP-OSD, which has shown better decoding performance compared with traditional BP-OSD\cite{roffe2020decoding}. Formally, the Monte Carlo method based on FDBP-OSD is given in Algorithm \ref{alg1}.
\begin{algorithm}
	\caption{The Monte Carlo method to determine the upper bound of code distance based on FDBP-OSD}
	\label{alg1}
	\LinesNumbered
	\KwIn{Stabilizer group  $\mathcal{S} = \langle S_1, \cdots, S_{n-1} \rangle$,\\
		three logical operators $X_L,\ Z_L,\ Y_L$ of the $[[N,1]]$ QSC,\\
		$\textbf{s}=\left(0,\ 0,\ \cdots,0,\ 1,\ 1\right)^T$,\\
		the number of trials $T$ and code length $N$.}
	\KwOut{the upper bound of the minimum weight of three logical operators $d_X^{up}$, $d_Z^{up}$ and $d_Y^{up}$.}
	$d_X^{up},\ d_Z^{up},\ d_Y^{up}= N$\\
	\For{$i\gets1\ to\ T$}{
		Randomly selecting $\gamma_X\in X_L\mathcal{S}$, $\gamma_Z\in Z_L\mathcal{S}$ and $\gamma_Y\in Y_L\mathcal{S}$.\\
		Decoding ${\hat{X}}_L=\textbf{FDBP-OSD}\left({\left(\begin{matrix}H\\\gamma_Z \\\gamma_Y\end{matrix}\right)},\ \textbf{s}\right)$, ${\hat{Z}}_L=\textbf{FDBP-OSD}\left({\left(\begin{matrix}H\\\gamma_X \\\gamma_Y\end{matrix}\right)},\ \textbf{s}\right)$, ${\hat{Y}}_L=\textbf{FDBP-OSD}\left({\left(\begin{matrix}H\\\gamma_X \\\gamma_Z\end{matrix}\right)},\ \textbf{s}\right)$.\\
		\If{$wt({\hat{X}}_L) < d_X^{up}$}{
			$d_X^{up} = wt({\hat{X}}_L)$
		}
		\If{$wt({\hat{Z}}_L) < d_Z^{up}$}{
			$d_Z^{up} = wt({\hat{Z}}_L)$
		}
		\If{$wt({\hat{Y}}_L) < d_Y^{up}$}{
			$d_Y^{up} = wt({\hat{Y}}_L)$
		}
	}
	\Return{$d_X^{up}$, $d_Z^{up}$, $d_Y^{up}$}
\end{algorithm}

\section{Quantum XYZ cyclic codes}
\label{Quantum XYZ cyclic codes}
In this section, we first introduce the quantum XYZ cyclic code construction and study its code dimension. Second, we identify the condition under which the quantum XYZ cyclic codes have repetition code structure under pure Pauli noise such that the corresponding codes might have strong error-correcting performance against biased noise. Finally, we investigate the minimum weight of logical operators.
\subsection{Code construction and code dimension}
\label{Code construction and code dimension}
The code construction of quantum XYZ cyclic codes is defined as follows.

\begin{definition}[\textbf{Quantum XYZ cyclic code construction}] 
\label{code construction}
Given two natural numbers $a$ and $b$, the stabilizer generators of the corresponding quantum XYZ cyclic code $C(a,b)$ are obtained by cyclic permutations of $S_1=X\underbrace{I \cdots I}_{b} Z\underbrace{I \cdots I}_{a} YIY\underbrace{I \cdots I}_{a} Z\underbrace{I \cdots I}_{b} X$, namely,
\begin{equation}
\label{stabilizer generator matrix}
\mathcal{H}={\left(\begin{matrix}X\underbrace{I \cdots I}_{b}Z\underbrace{I \cdots I}_{a}YIY\underbrace{I \cdots I}_{a}Z\underbrace{I \cdots I}_{b}X\\ XX\underbrace{I \cdots I}_{b}Z\underbrace{I \cdots I}_{a}YIY\underbrace{I \cdots I}_{a}Z\underbrace{I \cdots I}_{b} \\ \vdots \\ \underbrace{I \cdots I}_{b}Z\underbrace{I \cdots I}_{a}YIY\underbrace{I \cdots I}_{a}Z\underbrace{I\cdots I}_{b}XX \end{matrix}\right)}
\end{equation}
\end{definition}

We refer to $\mathcal{H}$ as the stabilizer generator matrix, and each stabilizer generator is of $XZYYZX$ type, with $b$ identities inserted between $X$ and $Z$, $a$ identities inserted between $Z$ and $Y$, and one identity inserted between two $Y$s.

Notice that for any natural numbers $a$ and $b$, the code length $N=2\left(a+b\right)+7$ is always an odd number, then the following three operators must be logical operators, since they commute with all stabilizers and each one anti-commutes with the other two.
\begin{equation}
	\begin{aligned}
		&X_L=\underbrace{X\cdots X}_{N}\\
		&Z_L = \underbrace{Z \cdots Z}_{N}\\
		&Y_L=\underbrace{Y\cdots Y}_{N}
	\end{aligned}	
\end{equation}

We refer to these three logical operators as $X$-type, $Z$-type and $Y$-type logical operators, respectively, since they only consist of one type of single-qubit Pauli operator.

Next, we prove that the stabilizer group of the corresponding quantum XYZ cyclic code $C(a,b)$ is Abelian.

\begin{proposition}
\label{Commutation}
Given two natural numbers $a$ and $b$, the stabilizer group of the quantum XYZ cyclic code $C(a,b)$ is Abelian.
\end{proposition}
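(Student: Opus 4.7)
The plan is to work in the polynomial ring $R=\mathbb{F}_2[X]/(X^N-1)$ with $N=2(a+b)+7$ and to show that $S_1$ commutes with every cyclic shift $\sigma^j(S_1)$; this suffices because every other generator is itself a shift of $S_1$. First I would encode $S_1$, via the Pauli-to-$\mathbb{F}_2$ map of Eq.~(\ref{equal6}), as a pair of polynomials in $R$. Reading off the layout $XI^bZI^aYIYI^aZI^bX$ yields the $X$-support polynomial $f(X)=1+X^{a+b+2}+X^{a+b+4}+X^{N-1}$ (positions carrying $X$ or $Y$) and the $Z$-support polynomial $g(X)=X^{b+1}+X^{a+b+2}+X^{a+b+4}+X^{2a+b+5}$ (positions carrying $Z$ or $Y$).

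The structural fact I will exploit is that $S_1$ is a palindrome: the Pauli at position $k$ equals the Pauli at position $N-1-k$, as the two $X$s sit at the endpoints, the two $Z$s at distance $b+1$ from either end, and the two $Y$s symmetrically about the center. Reflection of positions acts on polynomials by $p(X)\mapsto X^{N-1}p(X^{-1})$, so palindromy, together with the identity $X^{-(N-1)}=X$ in $R$, translates into
\begin{equation*}
f(X^{-1})=X\,f(X),\qquad g(X^{-1})=X\,g(X).
\end{equation*}
A short symplectic-inner-product computation then shows that the coefficient of $X^j$ in $f(X)g(X^{-1})+g(X)f(X^{-1})$ equals $\langle S_1,\sigma^j(S_1)\rangle \bmod 2$, so $S_1$ commutes with every one of its cyclic shifts if and only if this polynomial vanishes identically in $R$.

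Substituting the palindromic identities collapses the condition in one line:
\begin{equation*}
f(X)g(X^{-1})+g(X)f(X^{-1})=X\,f(X)g(X)+X\,g(X)f(X)=2X\,f(X)g(X)=0
\end{equation*}
in characteristic two. Hence every pair of stabilizer generators commutes and $\mathcal{S}$ is Abelian. The only step requiring care is the polynomial bookkeeping that converts the manifest left-right symmetry of $S_1$ into the palindromic identities for $f$ and $g$ in $R$; this is routine, and I do not anticipate any substantive obstacle.
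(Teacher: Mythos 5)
Your proof is correct. Both you and the paper work in the same circulant/polynomial framework: the paper writes $H_x=I+J^{a+b+2}+J^{a+b+4}+J^{2(a+b)+6}$ and $H_z=J^{b+1}+J^{a+b+2}+J^{a+b+4}+J^{2a+b+5}$ (your $f$ and $g$, since $2(a+b)+6=N-1$ and $J^T=J^{-1}$ corresponds exactly to $X\mapsto X^{-1}$), and both reduce commutativity to the vanishing of $f(X)g(X^{-1})+g(X)f(X^{-1})$ in $\mathbb{F}_2[X]/(X^N+1)$. Where you differ is in how that identity is established: the paper verifies it by brute-force expansion of the two four-term products (and in fact merely asserts that the thirty-two terms cancel to $\mathbf{0}$ without exhibiting the pairing), whereas you derive it in one line from the palindromic symmetry of $S_1$, which gives $f(X^{-1})=Xf(X)$ and $g(X^{-1})=Xg(X)$ and hence $f(X)g(X^{-1})+g(X)f(X^{-1})=2Xf(X)g(X)=0$ in characteristic two. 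Your route is more conceptual and strictly more general: it shows that \emph{any} quantum cyclic code of odd length whose defining generator is a palindrome under the reflection $k\mapsto N-1-k$ automatically has an Abelian stabilizer group, with the specific exponents $a+b+2$, $a+b+4$, $b+1$, $2a+b+5$ playing no role beyond their symmetric placement. The paper's expansion, by contrast, is tied to this particular generator but requires no structural observation. The only point to state explicitly if you write this up is the reduction from ``all pairs of generators commute'' to ``$S_1$ commutes with all of its shifts,'' which follows because cyclic shifting preserves the symplectic inner product; you use this implicitly and it is harmless.
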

\begin{proof}
To prove the stabilizer group of $C(a,b)$ is Abelian is to prove that any pair of stabilizers commute. According to Definition \ref{code construction}, the parity-check matrix $H$ over $\mathbb{F}_{2}$ of $C(a,b)$ is $H=(H_x\mid H_z)$, where
\begin{equation}
\label{equal11}
H_x={\left(\begin{matrix}1\ \underbrace{0 \cdots 0}_{a+b+1}\ 1\ 0\ 1\ \underbrace{0 \cdots 0}_{a+b+1}\ 1\\
1\ 1\ \underbrace{0 \cdots 0}_{a+b+1}\ 1\ 0\ 1\ \underbrace{0 \cdots 0}_{a+b+1} \\
\vdots \\ \underbrace{0 \cdots 0}_{a+b+1}\ 1\ 0\ 1\ \underbrace{0 \cdots 0}_{a+b+1}\ 1\  1\end{matrix}\right)}
\end{equation}
and
\begin{equation}\label{equal12}
H_z={\left(\begin{matrix}\underbrace{0 \cdots 0}_{b+1}\ 1\ \underbrace{0 \cdots 0}_{a}\ 1\ 0\ 1\ \underbrace{0 \cdots 0}_{a}\ 1\ \underbrace{0 \cdots 0}_{b+1}\\
\underbrace{0 \cdots 0}_{b+2}\ 1\ \underbrace{0 \cdots 0}_{a}\ 1\ 0\ 1\ \underbrace{0 \cdots 0}_{a}\ 1\ \underbrace{0 \cdots 0}_{b}\\ \vdots \\
\underbrace{0 \cdots 0}_{b}\ 1\ \underbrace{0 \cdots 0}_{a}\ 1\ 0\ 1\ \underbrace{0 \cdots 0}_{a}\ 1\ \underbrace{0 \cdots 0}_{b+2}\end{matrix}\right)}
\end{equation}
are both circulant matrix over $\mathbb{F}_{2}$.

As mentioned in Sect. \ref{Ring of circulants}, $H_x$ and $H_z$ can be rewritten as
\begin{equation}
H_x=I+J^{a+b+2}+J^{a+b+4}+J^{2(a+b)+6}
\end{equation}
and
\begin{equation}
H_z=J^{b+1}+J^{a+b+2}+J^{a+b+4}+J^{2a+b+5}
\end{equation}
where $J$ is the $\left[2\left(a+b\right)+7\right]\times[2(a+b)+7]$ permutation matrix. Observe that $J^T=J^{-1}$, we have
\begin{equation}
\label{equal15}
\begin{aligned}
&H_xH_z^T + H_zH_x^T \\
&  = \left(I + J^{a+b+2} + J^{a+b+4} + J^{2(a+b)+6}\right) \\
& \times \left(J^{-(b+1)} + J^{-(a+b+2)} + J^{-(a+b+4)} + J^{-(2a+b+5)}\right) \\
& + \left(J^{b+1} + J^{a+b+2} + J^{a+b+4} + J^{2a+b+5}\right) \\
& \times \left(I + J^{-(a+b+2)} + J^{-(a+b+4)} + J^{-2(a+b)-6}\right) \\
& = \textbf{0}
\end{aligned}
\end{equation}
which means any pair of stabilizers commute and the proof is completed.
\end{proof}

Next, we study the code dimension of quantum XYZ cyclic codes, and show that given any natural numbers $a$ and $b$, the corresponding code $C(a,b)$ encodes either one or three logical qubits.

\begin{proposition}[\textbf{The code dimension of quantum XYZ cyclic codes}]
\label{code dimension}
Given two natural numbers $a$ and $b$, the code dimension $k$ of quantum XYZ cyclic codes $C(a,b)$ is:
\begin{equation}
	\label{dimension}
	\begin{aligned}
	&\text{1. If $b=3l$, $k=1$.}\\
	&\text{2. If $b=3l-1$}, k=\left\{
		\begin{aligned}
			3,\ &if\ (a+1)\ mod\ 3=0 \\
			1,\ &otherwise
		\end{aligned}
		\right.\\
	&\text{3. If $b=3l-2$}, k=\left\{
	\begin{aligned}
		3,\ &if\ a\ mod\ 3=0 \\
		1,\ &otherwise
	\end{aligned}
	\right.\\
	\end{aligned}
\end{equation}
where $l=1,2,\cdots$.
\end{proposition}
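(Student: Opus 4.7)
The plan is to pass to the polynomial description of the bipartite circulant parity-check matrix and compute the code dimension through a gcd in $\mathbb{F}_2[x]$. Let $R = \mathbb{F}_2[x]/(x^N+1)$ with $N = 2(a+b)+7$; since $N$ is odd, $\gcd(x^N+1,\,Nx^{N-1})=1$ in $\mathbb{F}_2[x]$, so $x^N+1$ is squarefree and $R$ decomposes via CRT into a product of fields indexed by the distinct irreducible factors $p_i$ of $x^N+1$. Under the isomorphism of Sect. \ref{Ring of circulants}, the row span of $H = (H_x \mid H_z)$ inside $R \oplus R$ is the cyclic submodule $\{(f h_x,\, f h_z) : f \in R\}$, whose $\mathbb{F}_2$-dimension equals $N$ minus that of the common annihilator $\{f \in R : f h_x = f h_z = 0\}$. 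Decomposing this annihilator along the CRT factors, a factor $p_i$ contributes iff $p_i \mid h_x$ and $p_i \mid h_z$, so
\begin{equation*}
k \;=\; N - \mathrm{rank}(H) \;=\; \deg\bigl(\gcd(h_x(x),\, h_z(x),\, x^N+1)\bigr).
\end{equation*}

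Next I would read off the factorizations of the first rows from Eqs.~(\ref{equal11}) and (\ref{equal12}):
\begin{equation*}
h_x(x) = (1+x^{a+b+2})(1+x^{a+b+4}), \qquad h_z(x) = x^{b+1}(1+x^{a+1})(1+x^{a+3}),
\end{equation*}
each checked by direct expansion. The unit $x^{b+1}$ drops out of any gcd with $x^N+1$. Using $\gcd(1+x^m,\,x^N+1) = 1+x^{\gcd(m,N)}$ together with the Euclidean reductions $\gcd(a+b+2, N) = \gcd(a+b+2, 3)$ (from $N = 2(a+b+2)+3$) and $\gcd(a+b+4, N) = 1$ (from $N = 2(a+b+4)-1$), the factor $(1+x^{a+b+4})$ can only contribute $1+x$ to the gcd, while $(1+x^{a+b+2})$ contributes $1+x^3 = (1+x)(1+x+x^2)$ when $a+b \equiv 1 \pmod 3$ and only $1+x$ otherwise. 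Hence the irreducible divisors of $x^N+1$ that can possibly divide $h_x$ are $1+x$ and, conditionally, $1+x+x^2$.

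Both $h_x$ and $h_z$ have an even number of monomials, so $1+x$ always divides them. Since $1+x+x^2$ is prime in $\mathbb{F}_2[x]$, it divides $h_z$ iff it divides $1+x^{a+1}$ or $1+x^{a+3}$, i.e., iff $3 \mid a+1$ or $3 \mid a+3$, equivalently $a \not\equiv 1 \pmod 3$. Therefore $\deg(\gcd(h_x, h_z, x^N+1))$ equals $3$ precisely when $a+b \equiv 1 \pmod 3$ \emph{and} $a \not\equiv 1 \pmod 3$, and equals $1$ in all remaining situations. A short mod-$3$ bookkeeping of the joint condition, split by $b \bmod 3 \in \{0,\,2,\,1\}$ (matching the cases $b=3l$, $b=3l-1$, $b=3l-2$ of the proposition), reproduces exactly the three branches stated. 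The main obstacle is this final case split: one must verify that the three subcases cover the joint condition without overlap or omission, but no further algebraic input is needed beyond squarefreeness of $x^N+1$ and the two factorizations above.
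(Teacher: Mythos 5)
Your proposal is correct and follows essentially the same route as the paper: reduce $k$ to $\deg\bigl(\gcd(h_x,h_z,x^N+1)\bigr)$, factor $h_x=(1+x^{a+b+2})(1+x^{a+b+4})$ and $h_z=x^{b+1}(1+x^{a+1})(1+x^{a+3})$, evaluate $\gcd(a+b+4,N)=1$ and $\gcd(a+b+2,N)\in\{1,3\}$, and finish with the mod-$3$ case split on $a$ and $b$. Your CRT/squarefreeness step, which pins the possible common irreducible factors down to exactly $1+x$ and (conditionally) $1+x+x^2$, is in fact a tighter justification than the paper's Eq.~(\ref{equal17}), which passes from the polynomial-gcd degree to an integer gcd of products of exponents without comment.
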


\begin{proof}
According to the proof of \textbf{Proposition} 1 in Ref. \cite{panteleev2021degenerate}, the rank of the parity-check matrix $H=(H_x\mid H_z)$ of $C(a,b)$ is $2\left(a+b\right)+7-deg[\gcd(Ax,Bx,x^{2(a+b)+7}+1]$, where $\deg[f(x)]$ is the degree of $f(x)$, and $A\left(x\right)=1+x^{a+b+2}+x^{a+b+4}+x^{2(a+b)+6}$ and $B\left(x\right)=x^{b+1}+x^{a+b+2}+x^{a+b+4}+x^{2a+b+5}$ are the polynomials corresponding to circulant matrices $H_x$ and $H_z$, respectively. Thus, the code dimension $k=deg[\gcd(Ax,Bx,x^{2(a+b)+7}+1]$.
	
Notice that $A\left(x\right)$ and $B\left(x\right)$ can be rewritten as
\begin{equation}
A\left(x\right)=\left(1+x^{a+b+2}\right)\left(1+x^{a+b+4}\right)
\end{equation}
and
\begin{equation}
	B\left(x\right)=x^{b+1}\left(1+x^{a+1}\right)\left(1+x^{a+3}\right)
\end{equation}
thus, we have 
\begin{equation}
\label{equal17}
\begin{aligned}
	&deg[\gcd(Ax,Bx,x^{2(a+b)+7}+1]\\
	&=gcd((a+b+2)(a+b+4),(a+1)(a+3),2a+b+7)
\end{aligned}		
\end{equation}

First, we prove that for any $a$ and $b$, $\gcd{\left(a+b+4,\ 2\left(a+b\right)+7\right)}=1$. Let $m=a+b+4$ and notice that $m=\frac{N+1}{2}$, where $N=2\left(a+b\right)+7$ is the code length, we have $N=2m-1$. Suppose $\gcd{\left(m,\ N\right)}=d$, then $m$ and $m-1=N-m$ must both be multiples of $d$, thus $d$ must be $1$.
	
Second, we prove that if $a+b+2$ is a multiple of $3$, $\gcd{\left(a+b+2,\ 2\left(a+b\right)+7\right)}=3$, or $1$ otherwise. Let $m=a+b+2$ and notice that $m=\frac{N-3}{2}$, where $N=2\left(a+b\right)+7$, we have $N=2m+3$. Suppose $\gcd{\left(m,\ N\right)}=d$, then $m$ and $m+3=N-m$ must both be multiples of $d$, so $\gcd{\left(m,\ m+3\right)=d}$. If $m$ is a multiple of $3$, $d$ must be $3$, or $1$ otherwise.
	
To sum up, if $a+b+2$ is not a multiple of $3$, $gcd\left(\left(a+b+2\right)\left(a+b+4\right),2\left(a+b\right)+7\right)=1$, thus the code dimension $k$ must be $1$. While when $a+b+2$ is a multiple of $3$, to compute the right side of Eq. (\ref{equal17}), we only need to compute $\gcd{\left(a+b+2,(a+1)(a+3)\right)}$ and there are three cases as follows:

\textbf{Case 1:} If $b=3l$, $\gcd\left[a + 3l + 2, (a + 1)(a + 3)\right] \neq 3$.

\textbf{Case 2:} If $b=3l-1$, when $a+1$ is a multiple of 3, $\gcd\left[a+3l+1,\ \ a+1,\ 2\left(a+b\right)+7\right]=3$, or 1 otherwise.

\textbf{Case 3:} If $b=3l-2$, when $a$ is a multiple of 3, $gcd\left[a+3l,\ \ a+3,\ 2\left(a+b\right)+7\right]=3$, or 1 otherwise.

To sum up, we have Eq. (\ref{dimension}), and the proof is completed.
\end{proof}

In the rest of paper, we only consider the quantum XYZ cyclic codes which encode one logical qubit (unless specially noted). There are two reasons. First, if the quantum XYZ cyclic code encodes three logical qubits, it don't have repetition code structure under pure Pauli $Z$ noise, which is proved in Sect. \ref{repetition code structure}. Second, determining the code distance of a QSC, which is the minimum weight of all logical operators, is a NP-complete problem in general\cite{kapshikar2023hardness,kapshikar2022diagonal}, while in this case there are only three logical operators (logical $X$, $Y$ and $Z$ operators) that need to be considered.

\subsection{Quantum XYZ cyclic codes with repetition code structure}
\label{repetition code structure}
In this section, we provide the conditions under which the quantum XYZ cyclic codes have repetition code structure under pure Pauli noise and show that the quantum XYZ cyclic codes encoding three logical qubits don't have repetition code structure under pure Pauli $Z$ noise.
\begin{proposition}[\textbf{The repetition code structure of Quantum XYZ cyclic codes}]
\label{rpcs}
	Given two natural numbers $a$ and $b$, if
	\begin{equation}
		\left\{
		\begin{aligned}
			&\gcd{\left(a+b+2,N\right)}=1\\
			&\gcd{\left(a+b+4,N\right)}=1
		\end{aligned}
		\right.
	\end{equation}
	the corresponding quantum XYZ cyclic code $C(a,b)$ (encoding one logical qubit) has repetition code structure under pure Pauli $Z$ noise. Similarly, if
	\begin{equation}
		\left\{
		\begin{aligned}
			&\gcd{\left(a+1,N\right)}=1\\
			&\gcd{\left(a+3,N\right)}=1
		\end{aligned}
		\right.
	\end{equation}
	or
	\begin{equation}
		\left\{
		\begin{aligned}
			&\gcd{\left(b+1,N\right)}=1\\
			&\gcd{\left(2a+b+5,N\right)}=1
		\end{aligned}
		\right.
	\end{equation}
	$C(a,b)$ has repetition code structure under pure Pauli $X$ or $Y$ noise, where $N=2(a+b)+7$ is the code length.
\end{proposition}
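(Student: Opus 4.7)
The plan is to reduce the claim to a rank computation on a circulant matrix, which then becomes a polynomial gcd in $\mathbb{F}_2[x]/(x^N+1)$. Under pure Pauli $Z$ noise, a $Z$ error on qubit $i$ anti-commutes with the $j$-th stabilizer precisely when that stabilizer has an $X$ or $Y$ on qubit $i$, i.e., when $(H_x)_{ji}=1$; so the effective classical check matrix is $H_x$, and ``repetition-code structure'' amounts to saying that $\ker H_x$ is one-dimensional and generated by the all-ones vector $\mathbf{1}$, so that the single nontrivial coset contains $Z_L = Z^{\otimes N}$. Each row of $H_x$ has exactly four $1$'s (the two $X$'s and two $Y$'s of $XZYYZX$), hence $H_x\mathbf{1}=\mathbf{0}$ over $\mathbb{F}_2$, and it suffices to show $\mathrm{rank}(H_x)=N-1$.

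Using the circulant isomorphism of Sect.~\ref{Ring of circulants}, $H_x$ corresponds to $A(x)=(1+x^{a+b+2})(1+x^{a+b+4})$ in $\mathbb{F}_2[x]/(x^N+1)$, and the rank of the circulant equals $N-\deg\gcd(A(x),x^N+1)$, as already used in the proof of Proposition~\ref{code dimension}. Over $\mathbb{F}_2$ one has the standard identity $\gcd(1+x^m,x^N+1)=1+x^{\gcd(m,N)}$, so the two hypotheses force each factor to have gcd $1+x$ with $x^N+1$. Since $N$ is odd, $x^N+1$ is squarefree over $\mathbb{F}_2$ (its formal derivative $Nx^{N-1}$ is nonzero there), and for a squarefree modulus one has $\gcd(fg,h)=\mathrm{lcm}(\gcd(f,h),\gcd(g,h))$. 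Therefore $\gcd(A(x),x^N+1)=\mathrm{lcm}(1+x,1+x)=1+x$, of degree $1$, which gives $\mathrm{rank}(H_x)=N-1$.

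The $X$ and $Y$ cases go the same way. For $X$ noise the effective check matrix is $H_z$, whose polynomial $B(x)=x^{b+1}(1+x^{a+1})(1+x^{a+3})$ has the unit factor $x^{b+1}$ coprime to $x^N+1$, so the conditions $\gcd(a+1,N)=\gcd(a+3,N)=1$ yield rank $N-1$ by the identical argument. For $Y$ noise, since $Y$ commutes only with $I$ and $Y$, the effective check matrix is $H_x+H_z$ over $\mathbb{F}_2$; a short direct computation collapses $A(x)+B(x)$ to $(1+x^{b+1})(1+x^{2a+b+5})$, after which the third pair of conditions gives the claim.

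The only delicate step, and what I expect to be the main pitfall, is the passage from the individual gcds to the gcd of the product. Each factor $(1+x^{m_i})$ is itself divisible by $(1+x)$, so one might naively fear a $(1+x)^2$ appearing in the product-gcd; this is precisely averted by the squarefreeness of $x^N+1$ for odd $N$. The remainder of the argument is the straightforward dictionary between circulants and polynomials combined with the identity $\gcd(1+x^a,1+x^b)=1+x^{\gcd(a,b)}$.
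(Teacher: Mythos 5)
Your proof is correct and follows essentially the same route as the paper: reduce repetition-code structure under pure Pauli $\sigma$ noise to $\mathrm{rank}=N-1$ of the corresponding circulant, pass to the polynomial $\gcd$ with $x^N+1$, and use the hypotheses to force that $\gcd$ to be $1+x$. In fact you are slightly more careful than the paper, which simply asserts $\deg[\gcd(A(x),x^N+1)]=1$; your observation that $x^N+1$ is squarefree for odd $N$ (so the $(1+x)^2$ dividing $A(x)$ cannot propagate into the $\gcd$) is exactly the justification the paper leaves implicit.
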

\begin{proof}
	According to Eq. (\ref{stabilizer generator matrix}), the circulant parity-check matrices of $C(a,b)$ under pure Pauli $Z$, $X$ and $Y$ noise are
	\begin{equation}
		H_x=I+J^{a+b+2}+J^{a+b+4}+J^{2(a+b)+6}
	\end{equation}
	\begin{equation}
		H_z=J^{b+1}+J^{a+b+2}+J^{a+b+4}+J^{2a+b+5}
	\end{equation}
	and
	\begin{equation}
		\begin{aligned}
			H_y&=\left(H_x+H_z\right)\ mod\ 2\\
			&=I+J^{b+1}+J^{2a+b+5}+J^{2(a+b)+6}
		\end{aligned}	
	\end{equation}
	respectively. Recall that $J$ is the $N\times N$ permutation matrix and the polynomials corresponding to circulant matrices $H_x$, $H_z$ and $H_y$ are $A\left(x\right)=1+x^{a+b+2}+x^{a+b+4}+x^{2(a+b)+6}=(1+x^{a+b+2})(1+x^{a+b+4})$, $B\left(x\right)=x^{b+1}+x^{a+b+2}+x^{a+b+4}+x^{2a+b+5}=x^b(1+x^{a+1})(1+x^{a+3})$ and $C\left(x\right)=1+x^{b+1}+x^{2a+b+5}+x^{2(a+b)+6}=(1+x^{b+1})(1+x^{2a+b+5})$, respectively.
	
	Under pure Pauli $\sigma (\sigma\in{X,Y,Z})$ noise, proving $C(a,b)$ has repetition code structure is to prove $C(a,b)$ has no logical operators only consisting of single-qubit Pauli $\sigma$ operator whose weight is smaller than $N$.
	
	For $A\left(x\right)=(1+x^{a+b+2})(1+x^{a+b+4})$, when $\gcd{\left(a+b+2,N\right)}=\gcd{\left(a+b+4,N\right)}=1$, $\deg{\left[\gcd{\left(A\left(x\right),x^N+1\right)}\right]}=1$, which means $rank(A(x))=N-1$. In this case, there is only one logical $Z$ operator which consist of $N$ single-qubit Pauli $Z$ operators, thus $C(a,b)$ has repetition code structure under pure Pauli $Z$ noise. The analyses for $B\left(x\right)$ and $C\left(x\right)$ are the same and we do not repeat here.
\end{proof}

\begin{corollary}
	The quantum XYZ cyclic codes encoding three logical qubits don't have repetition code structure under pure Pauli $Z$ noise.
\end{corollary}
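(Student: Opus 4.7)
The plan is to exhibit, in every three-logical-qubit case, an explicit $Z$-only Pauli operator of weight $N/3$ that commutes with every stabilizer yet is not itself a stabilizer; producing such an operator immediately violates the characterisation of the repetition-code structure used in the proof of Proposition 3 (``no $Z$-only logical has weight below $N$''), and so settles the corollary.

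First I would extract the arithmetic of the three-logical-qubit hypothesis. In both sub-cases from Proposition 2 one has $a+b+2\equiv 0\pmod 3$, and using $N=2(a+b+2)+3$ together with the $\gcd(m,m+3)$ trick already invoked in the proof of Proposition 2 this forces $\gcd(a+b+2,N)=3$ and $\gcd(a+b+4,N)=1$, so that $d_A(x):=\gcd(A(x),x^N+1)=(1+x)(1+x+x^2)=1+x^3$. The natural candidate for a short $Z$-only logical is then $v_0(x):=(x^N+1)/(1+x^3)=1+x^3+x^6+\cdots+x^{N-3}$, of Hamming weight $N/3$. Since $(1+x^3)\mid A(x)$, the product $A(x)v_0(x)$ is a multiple of $x^N+1$, so $v_0\in\ker H_x$; equivalently, the corresponding $Z$-string commutes with every stabilizer generator.

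The decisive step is to rule out $v_0$ being a stabilizer, and this is the main obstacle of the proof. It requires translating the distinction between ``commuting with all stabilizers'' and ``being a stabilizer'' into ideal language on the circulant ring $R_N=\mathbb{F}_2[x]/(x^N+1)$: a $Z$-only element of the stabilizer group corresponds to a polynomial $u(x)B(x)$ with $A(x)u(x)\equiv 0$, i.e.\ with $u\in(v_0)$, so the entire $Z$-only stabilizer subspace equals the ideal $(v_0 B)\subset R_N$. It collapses to $\{0\}$ as soon as $d_A(x)=1+x^3$ divides $B(x)$. Using the factorisation $B(x)=x^{b+1}(1+x^{a+1})(1+x^{a+3})$ from Proposition 3, divisibility by $1+x$ is automatic, and divisibility by $1+x+x^2$ reduces to checking $3\mid(a+1)$ (which holds in sub-case $b=3l-1$) or $3\mid a$ (which holds in sub-case $b=3l-2$). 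Either way $(1+x^3)\mid B(x)$, the $Z$-only stabilizer subspace is trivial, and $v_0$ becomes a genuine $Z$-type logical operator of weight $N/3<N$, which is strictly less than the length-$N$ all-$Z$ string and therefore breaks the repetition-code structure under pure Pauli $Z$ noise.
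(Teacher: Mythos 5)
Your proof is correct, and it takes a genuinely different -- and in fact more complete -- route than the paper's. The paper's own argument only computes $a+b+2=3(l+m)$ and $N=3(2l+2m+1)$ in the two three-qubit cases and concludes that $\gcd(a+b+2,N)\neq 1$, i.e.\ that the \emph{sufficient} condition of Proposition~\ref{rpcs} fails; strictly speaking this does not by itself rule out the repetition-code structure, since a sufficient condition failing proves nothing about the conclusion. You instead exhibit the explicit witness: since $x^N+1$ is squarefree for odd $N$ and $3\mid(a+b+2)$, $3\mid N$, one gets $\gcd(A(x),x^N+1)=1+x^3$, so $v_0(x)=(x^N+1)/(1+x^3)$ of weight $N/3$ lies in the annihilator of $A(x)$ and hence corresponds to a $Z$-only operator commuting with all stabilizers; and because $1+x^3$ also divides $B(x)=x^{b+1}(1+x^{a+1})(1+x^{a+3})$ in both sub-cases (via $3\mid(a+1)$ or $3\mid a$), the ideal $(v_0B)$ of $Z$-only stabilizers collapses to $\{0\}$, so $v_0$ is a genuine logical operator of weight $N/3<N$. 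This closes the logical gap in the paper's version, and as a bonus gives the quantitative statement that the effective $Z$-distance of the three-qubit codes is at most $N/3$ (in fact exactly the minimum weight of the degree-$3$ kernel of $H_x$). The only point worth flagging is the usual circulant transpose convention -- the commutation condition is $H_x\mathbf{e}_z^{T}=0$, which involves the reciprocal polynomial -- but since the support $\{0,3,\dots,N-3\}$ of $v_0$ is invariant under negation modulo $N$, this does not affect your conclusion.
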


\begin{proof}
	According to \textbf{Proposition} \ref{code dimension}, when $b=3l-1$, $(a+1)\ mod\ 3=0$ or $b=3l-2$, $a\ mod\ 3=0$, the corresponding quantum XYZ cyclic code encodes three logical qubits.
	
	\textbf{Case 1:} $b=3l-1$, $(a+1)\ mod\ 3=0$. Suppose $a=3m-1, (m=1,2,3,\cdots)$, the code length of the corresponding code is $N=2(a+b)+7=3(2l+2m+1)$ and $a+b+2=3l+3m$. Thus, $\gcd{\left(a+b+2,N\right)}\neq1$, which don't satisfy the condition in \textbf{Proposition} \ref{rpcs}.
	
	\textbf{Case 2:} $b=3l-2$, $a\ mod\ 3=0$. Suppose $a=3m, (m=1,2,3,\cdots)$, the code length of the corresponding code is $N=2(a+b)+7=3(2l+2m+1)$ and $a+b+2=3l+3m$. Thus, $\gcd{\left(a+b+2,N\right)}\neq1$, which also don't satisfy the condition in \textbf{Proposition} \ref{rpcs}.
\end{proof}

\subsection{The minimum weight of logical $X$ and $Y$ operators}
\label{The minimum weight of logical X and Y operators}

In this section, we prove that the minimum weight of logical $X$ operator can grow linearly with code length, while the minimum weight of logical $Y$ operator is bounded by linear polynomials only related to natural number $b$. For the minimum weight of logical $Z$ operator, we do not give strict proof, but we exploit Algorithm \ref{alg1} to explore it and our results are given in Sect \ref{simulation results}. The results show that the minimum weight of logical $Z$ operator can also grow linearly with code length. These results indicates the code distance of quantum XYZ cyclic codes can increase with code length by increasing $b$.

Proving the minimum weight of logical $\sigma (\sigma\in{X,Y,Z})$ operator of the quantum XYZ cyclic code is equivalent to identifying a subset of stabilizer generators from the stabilizer generator matrix $\mathcal{H}$, namely, Eq. (\ref{stabilizer generator matrix}), such that their product contains the most single-qubit Pauli $\sigma (\sigma\in{X,Y,Z})$ operators. We refer to this subset as a stabilizer selection method.

Next, we prove Lemma \ref{Lemma1}, which will be in the proof of \textbf{Proposition} \ref{The minimum weight of logical X operator}.
\begin{lemma}
\label{Lemma1}
Let $\textbf{y}$ be a binary row vector of size $T$ ($T$ is an odd number), if the code length $N$ of quantum XYZ cyclic code $C$ is an odd multiple of $T$, namely, $N=kT,\ (k=3,5,7,\cdots)$, we have
\begin{equation}
\label{equal18}
\begin{aligned}
&( \underbrace{\textbf{y}, \cdots, \textbf{y}}_{k}) \textbf{H} \\
& = \left( ( \underbrace{\textbf{y}, \cdots, \textbf{y}}_{k} ) \textbf{H}_1,\ ( \underbrace{\textbf{y}, \cdots, \textbf{y}}_{k} ) \textbf{H}_2,\ \cdots,\ (\underbrace{\textbf{y}, \cdots, \textbf{y}}_{k}) \textbf{H}_k \right) \\
& = \left( \textbf{y} \sum_{i=1}^{k} \textbf{H}_{1,i}, \textbf{y} \sum_{i=1}^{k} \textbf{H}_{1,i}, \cdots, \textbf{y} \sum_{i=1}^{k} \textbf{H}_{1,i} \right) \quad 
\end{aligned}
\end{equation}
where $\textbf{H}_{i,j},\ \left(1\le i,j\le k\right)$ are all $T\times T$ matrices, 
\begin{equation}
\label{equal19}
\textbf{H}_{j}=\left(\begin{matrix}\textbf{H}_{1,j}\\ \textbf{H}_{2,j}\\\vdots\\ \textbf{H}_{k,j}\end{matrix}\right),\left(1\le j\le k\right)
\end{equation}
and
\begin{equation}
\label{equal20}
\begin{aligned}
\textbf{H}=\left(\textbf{H}_{1},\ \textbf{H}_{2},\ \cdots,\textbf{H}_{k}\right)=\left(\begin{matrix}\textbf{H}_{1,1}&\textbf{H}_{1,2}&\cdots&\textbf{H}_{1,k}\\ \textbf{H}_{2,1}&\textbf{H}_{2,2}&\cdots&\textbf{H}_{2,k}\\\vdots&\vdots&\ddots&\vdots\\\textbf{H}_{k,1}&\textbf{H}_{k,2}&\cdots&\textbf{H}_{k,k}\end{matrix}\right)
\end{aligned}
\end{equation}

is the parity-check matrix over $\mathbb{F}_{4}$ of $C$.
\end{lemma}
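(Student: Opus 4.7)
The plan is to exploit the circulant structure of $\mathbf{H}$ that the quantum XYZ cyclic code inherits from Definition \ref{code construction}. Since every stabilizer generator is obtained from $S_1$ by a single cyclic shift, the parity-check matrix $\mathbf{H}$ over $\mathbb{F}_{4}$ is an $N\times N$ circulant matrix, so I can write $\mathbf{H}[p,q]=a_{(q-p)\bmod N}$ for a sequence $(a_0,\ldots,a_{N-1})\in\mathbb{F}_{4}^{N}$ read off from the first row. The natural step is to partition $\mathbf{H}$ into $T\times T$ blocks $\mathbf{H}_{i,j}$ as in Eq. (\ref{equal20}), which is permissible because $T\mid N$, and to show that this partition is block-circulant in $i,j$.

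First I would verify this block-circulant property by direct index bookkeeping. For the $(r,s)$ entry of the $(i,j)$ block,
\begin{equation}
\mathbf{H}_{i,j}[r,s]=\mathbf{H}\bigl[(i-1)T+r,\,(j-1)T+s\bigr]=a_{((j-i)T+s-r)\bmod N},
\end{equation}
so $\mathbf{H}_{i,j}$ depends on $(i,j)$ only through the residue $(j-i)\bmod k$. Consequently, when $j$ is fixed and $i$ ranges over $\{1,\ldots,k\}$, the shift index $(j-i)\bmod k$ traverses every residue class modulo $k$ exactly once; similarly, running over the first block-row via $\mathbf{H}_{1,i}$ with $i\in\{1,\ldots,k\}$ traverses $(i-1)\bmod k$ through every residue exactly once. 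It follows that
\begin{equation}
\sum_{i=1}^{k}\mathbf{H}_{i,j}\;=\;\sum_{i=1}^{k}\mathbf{H}_{1,i}
\end{equation}
for every $j\in\{1,\ldots,k\}$, and in particular the sum is independent of $j$.

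To finish, I would distribute $(\underbrace{\mathbf{y},\ldots,\mathbf{y}}_{k})$ across the column-block decomposition $\mathbf{H}=(\mathbf{H}_{1},\ldots,\mathbf{H}_{k})$, and then, within each column block $\mathbf{H}_{j}$ of vertically stacked blocks $\mathbf{H}_{1,j},\ldots,\mathbf{H}_{k,j}$, identify $(\underbrace{\mathbf{y},\ldots,\mathbf{y}}_{k})\mathbf{H}_{j}=\mathbf{y}\sum_{i=1}^{k}\mathbf{H}_{i,j}$. Combining with the identity above immediately produces Eq. (\ref{equal18}).

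I do not foresee a substantive obstacle: the statement is essentially the bookkeeping fact that an $N\times N$ circulant matrix, when cut into $T\times T$ blocks with $T\mid N$, is block-circulant, and summing any full block-column gives the same result. The hypotheses that $T$ and $k$ are odd are not strictly invoked in the algebraic manipulation; they are natural in context because the XYZ cyclic codes have odd length $N=2(a+b)+7$, which forces both factors in any factorization $N=kT$ with $T>1$ to be odd. The only subtlety worth flagging in the writeup is that $\mathbf{H}_{i,j}$ need not itself be a $T\times T$ circulant matrix; only the block-level circulant structure is used, and that is all the lemma needs.
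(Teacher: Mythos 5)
Your proof is correct and follows essentially the same route as the paper: both reduce the claim to the fact that the $T\times T$ block partition of the circulant matrix $\textbf{H}$ is block-circulant, so every block-column sums to $\sum_{i=1}^{k}\textbf{H}_{1,i}$. The only difference is cosmetic — the paper simply asserts the column-cyclic block identity $\textbf{H}_{i,j}=\textbf{H}_{1,(j-i+1)\bmod k}$, whereas you derive it by explicit entry-level index bookkeeping — and your observations that the oddness of $T$ and $k$ is not needed for the algebra and that the individual blocks need not be circulant are both accurate.
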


\begin{proof}
Observing that $\textbf{\emph{H}}$ is not only a row cyclic matrix, but also a column cyclic matrix, then we have
\begin{equation}
\label{equal21}
\begin{aligned}
\textbf{\emph{H}}_{j}&=\left(\begin{matrix}\textbf{\emph{H}}_{1,j}\\ \textbf{\emph{H}}_{2,j}\\\vdots\\ \textbf{\emph{H}}_{k,j}\end{matrix}\right)=\left(\begin{matrix}\textbf{\emph{H}}_{1,j}\\
\textbf{\emph{H}}_{1,(j-1)\ mod\ k}\\
\cdots\\
\textbf{\emph{H}}_{1,(j-k+1)\ mod\ k}
\end{matrix}\right)
\end{aligned}
\end{equation}
Thus,
\begin{equation}
\label{equal22}
\begin{aligned}
&(\underbrace{\textbf{\emph{y}}, \cdots, \textbf{\emph{y}}}_{k})\left(\textbf{\emph{H}}_{1},\ \textbf{\emph{H}}_{2},\ \cdots,\ \textbf{\emph{H}}_{k}\right)\\
&=(\underbrace{\textbf{\emph{y}}, \cdots, \textbf{\emph{y}}}_{k}) \left(\begin{matrix}
&\textbf{\emph{H}}_{1,1}&\textbf{\emph{H}}_{1,2}&\cdots&\textbf{\emph{H}}_{1,k}\\
&\textbf{\emph{H}}_{1,k}&\textbf{\emph{H}}_{1,1}&\cdots&\textbf{\emph{H}}_{1,(k-1)}\\
&\vdots&\vdots&\ddots&\vdots\\
&\textbf{\emph{H}}_{1,2}&\textbf{\emph{H}}_{1,k}&\cdots&\textbf{\emph{H}}_{1,1}
\end{matrix}\right)\\
&=\left( \textbf{\emph{y}} \sum_{i=1}^{k} \textbf{\emph{H}}_{1,i}, \textbf{\emph{y}} \sum_{i=1}^{k} \textbf{\emph{H}}_{1,i}, \cdots, \textbf{\emph{y}} \sum_{i=1}^{k} \textbf{\emph{H}}_{1,i} \right)
\end{aligned}
\end{equation}
which completes the proof.
\end{proof}

\textbf{Lemma} \ref{Lemma1} indicates that for a quantum XYZ cycle code of length $N=kT$, if we select stabilizer generators at intervals of $T$ in a specific method (for example, if the 1st stabilizer generator is selected, then the $(1+mT)$th $(1\le m\le k-1)$ stabilizer generators are also selected), the form of the resulting stabilizer is divided into $k$ parts, each of which has the same form.
\begin{proposition}[\textbf{The minimum weight of logical $X$ operator}]
\label{The minimum weight of logical X operator}
Given natural numbers $b$ and $a=2l\left(b+2\right)+l-1,\ \ (l=1,2,\cdots)$, the logical $X$ operator of the minimum-weight form of quantum XYZ cyclic code $C(a,b)$ is
\begin{equation}
\hat{X}_L=\underbrace{\underbrace{\underbrace{I\cdots I}_{b+2}\ X\ \underbrace{I\cdots I}_{b+2}}\cdots\underbrace{\underbrace{I\cdots I}_{b+2}\ X\ \underbrace{I\cdots I}_{b+2}}}_{2l+1}
\end{equation}
and the minimum weight $d_X$ of $\hat{X}_L$ is $2l+1$. 
\end{proposition}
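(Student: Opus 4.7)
The plan is to reduce the claim to a classical cyclic-code analysis in the polynomial ring $R:=\mathbb{F}_2[x]/(x^N+1)$. The key observation is that the hypothesis $a=2l(b+2)+l-1$ is equivalent to the identities $a+1=lT$ and $N=(2l+1)T$, where $T:=2b+5$; this partitions the qubits into $2l+1$ blocks of length $T$, and $\hat X_L$ places exactly one $X$ (at position $b+2$) per block. The single most useful identity is the factorization $x^N+1=(x^T+1)R(x)$ with $R(x):=1+x^T+x^{2T}+\cdots+x^{2lT}$, yielding $x^{jT}R\equiv R$ and, in particular, $R(x)(1+x^{lT})\equiv 0$ in $R$.

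To establish $d_X\le 2l+1$ I would exhibit a stabilizer product equating $X_L$ to $\hat{X}_L$. Encoding such a product by a polynomial $\mu(x)$ whose $X$- and $Z$-parts are $\mu A$ and $\mu B$ (with $A(x)=(1+x^{a+b+2})(1+x^{a+b+4})$ and $B(x)=x^{b+1}(1+x^{a+1})(1+x^{a+3})$), the ansatz $\mu=\mu_0 R$ kills the $Z$-part automatically because $(1+x^{lT})$ divides $B$; using $x^{jT}R\equiv R$, the $X$-part simplifies to $\mu_0\,R(1+x^{b+1})(1+x^{b+3})$. The requirement $\mu A\equiv X_L+\hat{X}_L$ then reduces modulo $x^T+1$ to $\mu_0(1+x^{b+1})(1+x^{b+3})\equiv P(x)$, where $P(x)=\sum_{i=0,\ i\ne b+2}^{2b+4}x^i$. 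Solvability is routine: $P(1)=2b+4$ is even so $x+1$ divides $P$, and the $k=1$ hypothesis from Proposition~2 (which forces $b\equiv 0\pmod 3$ for this family) gives $\gcd\bigl((1+x^{b+1})(1+x^{b+3}),x^T+1\bigr)=x+1$. Lemma~1 then packages the resulting period-$T$ stabilizer product into the required block form.

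For the matching inequality $d_X\ge 2l+1$, I would analyze the classical cyclic code of $X$-only Paulis commuting with every stabilizer, namely polynomials $v\in R$ with $v\,\tilde{B}\equiv 0$ where $\tilde{B}(x):=B(x^{-1})$. The inclusion $(R)\subseteq\ker(\cdot\tilde{B})$ holds because $R(1+x^{lT})\equiv 0$. Every nonzero element of $(R)$ has the form $R(x)q(x)$ with $\deg q<T$, and since the $2l+1$ shifts of $q$ by multiples of $T$ do not overlap, its weight equals $(2l+1)\,wt(q)\ge 2l+1$. Moreover $\hat X_L=R(x)\,x^{b+2}$ attains this bound, and since $x^{b+2}\notin(x+1)R_T$ (with $R_T:=\mathbb{F}_2[x]/(x^T+1)$) it represents a genuine logical rather than a stabilizer.

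The principal obstacle is pinning down the whole kernel, not just the sub-ideal $(R)$. For some $(b,l)$ satisfying the hypotheses, the secondary factor $1+x^{a+3}$ shares additional common divisors with $x^N+1$ beyond those contributed by $1+x^{lT}$, so $\ker(\cdot\tilde{B})$ is strictly larger than $(R)$ and the elementary non-overlap bound no longer controls commuting $X$-only Paulis lying outside $(R)$. Closing this gap requires either a BCH-style minimum-distance bound on the enlarged cyclic code or a direct argument that any coset representative of $X_L\mathcal{S}$ with weight below $2l+1$ must already lie in $(R)$ — a combinatorial step I expect to be the most technical part of the proof.
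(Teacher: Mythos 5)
Your upper-bound half ($d_X\le 2l+1$) is correct and takes a genuinely different route from the paper: the paper constructs the maximizing stabilizer product explicitly by induction on $l$ via the block decomposition of Lemma~\ref{Lemma1}, whereas you reduce everything modulo $x^T+1$ and solve $\mu_0(1+x^{b+1})(1+x^{b+3})\equiv P(x)$ once, which is cleaner and makes the role of $a+1=lT$ transparent. One small correction there: $k=1$ does not force $b\equiv 0\pmod 3$ in this family ($b\equiv 1\pmod 3$ with $l\not\equiv 1\pmod 3$ also gives $k=1$), but since $\gcd(x^{b+3}+1,x^T+1)=x+1$ and $\gcd(x^{b+1}+1,x^T+1)=x^{\gcd(b+1,3)}+1$, the relevant gcd is $x+1$ whenever $b\not\equiv 2\pmod 3$, which is exactly the case left after imposing $k=1$; so your conclusion survives.

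The lower bound, however, has two gaps, and only one of them is the one you flag. The acknowledged gap is real: for admissible parameters such as $b=3$, $l=3$ (so $a=32$, $N=77$, $\gcd(a+3,N)=7$) the kernel of $v\mapsto v\tilde B$ strictly contains $(R)$, so the non-overlap bound does not cover all $X$-only commuting Paulis. The unacknowledged and more basic problem is that your framework bounds the wrong quantity. The proposition's $d_X$ is $\min_{S\in\mathcal S}wt(X_LS)$, and since $wt(X_LS)=N-\#\{i:S_i=X\}$, the minimum-weight representative is $X_LS^\ast$ for the stabilizer $S^\ast$ containing the most single-qubit $X$ factors; that representative generally contains $Y$'s and $Z$'s (compare Proposition~\ref{The upper bound of the minimum weight of the logical Y operator}, where the minimum-weight logical $Y$ is far from $Y$-only). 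Lower-bounding the weights of $X$-only elements of $\ker(\cdot\tilde B)$ therefore controls only the $X$-only members of the coset $X_L\mathcal S$ and does not yield $d_X\ge 2l+1$. What is actually needed is an upper bound on $\max_{\mu}\#\{i:(\mu A)_i=1,\ (\mu B)_i=0\}$ over \emph{all} $\mu$, not only those with $\mu B\equiv 0$ — this joint condition on $\mu A$ and $\mu B$ is precisely what the paper's Appendix~\ref{Proof of X} induction (however informally) targets. Your plan would need to be rebuilt around that quantity before the combinatorial step you identify as "most technical" even becomes the right problem.
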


The proof of \textbf{Proposition} \ref{The minimum weight of logical X operator} is given in Appendix \ref{Proof of X}.

Next, we prove that, given any natural number $b$, one can find a stabilizer selection method such that the weight of the resulting logical $Y$ operator $\hat{Y}_L$ converges to a constant which is only related to $b$ with $a$ tending to infinite in Proposition \ref{The upper bound of the minimum weight of the logical Y operator}. This constant is the upper bound of the minimum weight of the logical $Y$ operator.

\begin{proposition}[\textbf{The upper bound of the minimum weight of the logical $Y$ operator}]
\label{The upper bound of the minimum weight of the logical Y operator}
Given two natural numbers $a$ and $b$, when $a$ tends to infinite, the upper bound $d_Y^{up}$ of the minimum weight of the logical $Y$ operator $\hat{Y}_L$ of the quantum XYZ cyclic code $C(a,b)$ is
\begin{equation}
\label{dup_y}
d_Y^{up}=\left\{
\begin{aligned}
&2b+5,\ if\ b=3l\ or\ b=3l-2,\\
&2b+3,\ if\ b=3l-1.
\end{aligned}
\right.
\end{equation}
where $l=1,2,\cdots$.
\end{proposition}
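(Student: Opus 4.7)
The plan is to exhibit, for each residue class of $b \bmod 3$, an explicit subset $\mathcal{J}$ of stabilizer generator indices such that the product $P = \prod_{j \in \mathcal{J}} S_j$ satisfies $\mathrm{wt}(Y_L \cdot P) \leq 2b+5$ (or $2b+3$ in the $b = 3l-1$ case). Since $Y_L \cdot P$ lies in the coset $Y_L \mathcal{S}$, it is a logical $Y$ operator, and its weight is by definition an upper bound on the minimum weight of any such operator.

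First, I would unpack the cyclic structure of each generator $S_j$ by locating its six non-identity positions modulo $N = 2(a+b)+7$: the two $X$'s at the adjacent pair $\{j-1,\ j\}$, the two $Z$'s at $\{j+b+1,\ j-(b+2)\}$ (cyclically $2b+3$ apart going one way round), and the two $Y$'s at $\{j+a+b+2,\ j+a+b+4\}$. The crucial observation is that as $a \to \infty$, the $(X,Z)$-support of $S_j$ stays confined to a window of length $2b+4$ around position $j$, while its $Y$-support is pinned near $j+a+b+3$, on the diametrically opposite side of the cycle; the two regions do not interfere, and neither wraps around.

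Second, I would choose $\mathcal{J}$ to be a short window of consecutive indices (or a union of two such windows related by a shift of order $a$), engineered so that (i) the $X$-contributions telescope via $XX = I$ across adjacent generators, leaving only a pair of boundary $X$'s; (ii) the $Z$-contributions from generators whose indices are $b+2$ apart cancel in pairs, leaving an interval of $Z$'s of length $O(b)$ near position $b+2$; and (iii) the $Y$-contributions from the chosen indices produce an interval of $Y$'s near position $a+b+3$ which, after multiplication by $Y_L$, collapses to $I$ except at an $O(1)$-sized boundary. Translating the calculation into the $\mathbb{F}_2$ representation of Eqs.~(\ref{equal11})--(\ref{equal12}) and adding up the surviving non-identity positions then gives a total weight of $2b+5$.

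Third, in the exceptional case $b = 3l - 1$, I would show that an additional cancellation is available. The residue condition forces one of the step sizes $b+1$, $b+2$, $a+b+2$, $a+b+4$ to become divisible by $3$, which enables a second family of stabilizer indices forming a length-$3$ arithmetic progression to be appended to $\mathcal{J}$ without introducing new non-identity positions; this kills one pair of surviving $X$'s or $Z$'s at the boundary and trims the weight from $2b+5$ to $2b+3$. The main obstacle will be justifying exactly this residue-class split: it is not a priori clear why $b = 3l$ and $b = 3l-2$ behave identically while $b = 3l-1$ is strictly better, and verifying it requires checking that the extra-cancellation trick is available in the $b = 3l-1$ case but obstructed in the other two, paralleling the dimension dichotomy of Proposition~\ref{code dimension}. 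The asymptotic hypothesis $a \to \infty$ is used solely to guarantee that the local $(X,Z)$-window of each selected $S_j$ does not wrap around the code and interfere with the antipodal $Y$-window, so that the constructed weight is independent of $a$.
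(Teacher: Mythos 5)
There is a genuine gap, and it lies at the very center of your construction. You correctly note that the logical operator to be exhibited is $Y_L\cdot P$ with $P=\prod_{j\in\mathcal J}S_j$, but you then design $\mathcal J$ so that $P$ has \emph{small} support: telescoped $X$'s at an $O(1)$ boundary, an $O(b)$ interval of $Z$'s, and an interval of $Y$'s near the antipodal position that "collapses to $I$" under $Y_L$. Since $Y\cdot Y\propto I$ while $Y\cdot I=Y$, $Y\cdot X\propto Z$ and $Y\cdot Z\propto X$, one has $\mathrm{wt}(Y_L P)=N-\#\{i:P_i=Y\}$. A short window of consecutive generators (or a union of two such windows) yields a $P$ that is the identity on all but $O(a^0\cdot b)$ positions, so $Y_L P$ has weight $N-O(b)\to\infty$ as $a\to\infty$ --- the opposite of what is needed. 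To get weight $2b+5$ you must make $P$ equal to $Y$ on all but $2b+5$ positions, and that forces $\mathcal J$ to contain a constant fraction (roughly half) of \emph{all} $2(a+b)+6$ generators, chosen periodically so that their $XX$ pairs tile essentially the whole cycle and their $Z$ pairs tile the middle $2a+4$ positions; the pointwise products $XZ\propto Y$ then convert almost the entire string into $Y$'s. This is exactly what the paper does: for $b=3l$ it takes every other generator, obtaining $P_1=\underbrace{X\cdots X}_{2a+2b+6}I$, $P_2=\underbrace{I\cdots I}_{b+1}\underbrace{Z\cdots Z}_{2a+4}\underbrace{I\cdots I}_{b+2}$ and $P_3=\underbrace{I\cdots I}_{a+b+2}YIIY\underbrace{I\cdots I}_{a+b+1}$, whose product is $Y$ except at $2b+5$ positions; for $b\equiv 1,2\pmod 3$ it selects three-out-of-every-six generators and the same bookkeeping gives $2b+3$ or $2b+5$.

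Your explanation of the residue-class split is also not load-bearing as stated: the dichotomy does not come from an extra cancellation appended to a window, but from the fact that the interval-one selection is only consistent with the cyclic length when $b\equiv 0\pmod 3$, while the other two classes require the interval-three pattern, whose arithmetic (with $k=\lfloor a/3\rfloor$, $b=3l-1$) happens to leave $6l+1=2b+3$ non-$Y$ positions versus $2b+5$ for $b=3l-2$. The role of $a\to\infty$ is likewise not about windows wrapping around; it only ensures the periodic selection is well defined and the count of residual positions depends on $b$ alone. To repair the proof, replace the local window by the global periodic selection and redo the weight count as a count of non-$Y$ positions of $P$.
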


The proof of \textbf{Proposition} \ref{The upper bound of the minimum weight of the logical Y operator} is given in Appendix \ref{Proof of Y}.

\textbf{Proposition} \ref{The minimum weight of logical X operator}, \textbf{Proposition} \ref{The upper bound of the minimum weight of the logical Y operator} and the results in Sect. \ref{simulation results} leads to a conclusion that the code distance of quantum XYZ cyclic codes can increase with code length by increasing $b$.

\section{Simulation results}
\label{simulation results}
In this section, exploiting FDBP-OSD, we study the upper bound of the code distance and error-correcting performance of quantum XYZ cyclic codes through Monte Carlo method. Besides, we compare the physical qubits overhead between quantum XYZ cyclic codes and XZZX surface codes.

\begin{figure}
	\centering
	\includegraphics[width=0.45\textwidth]{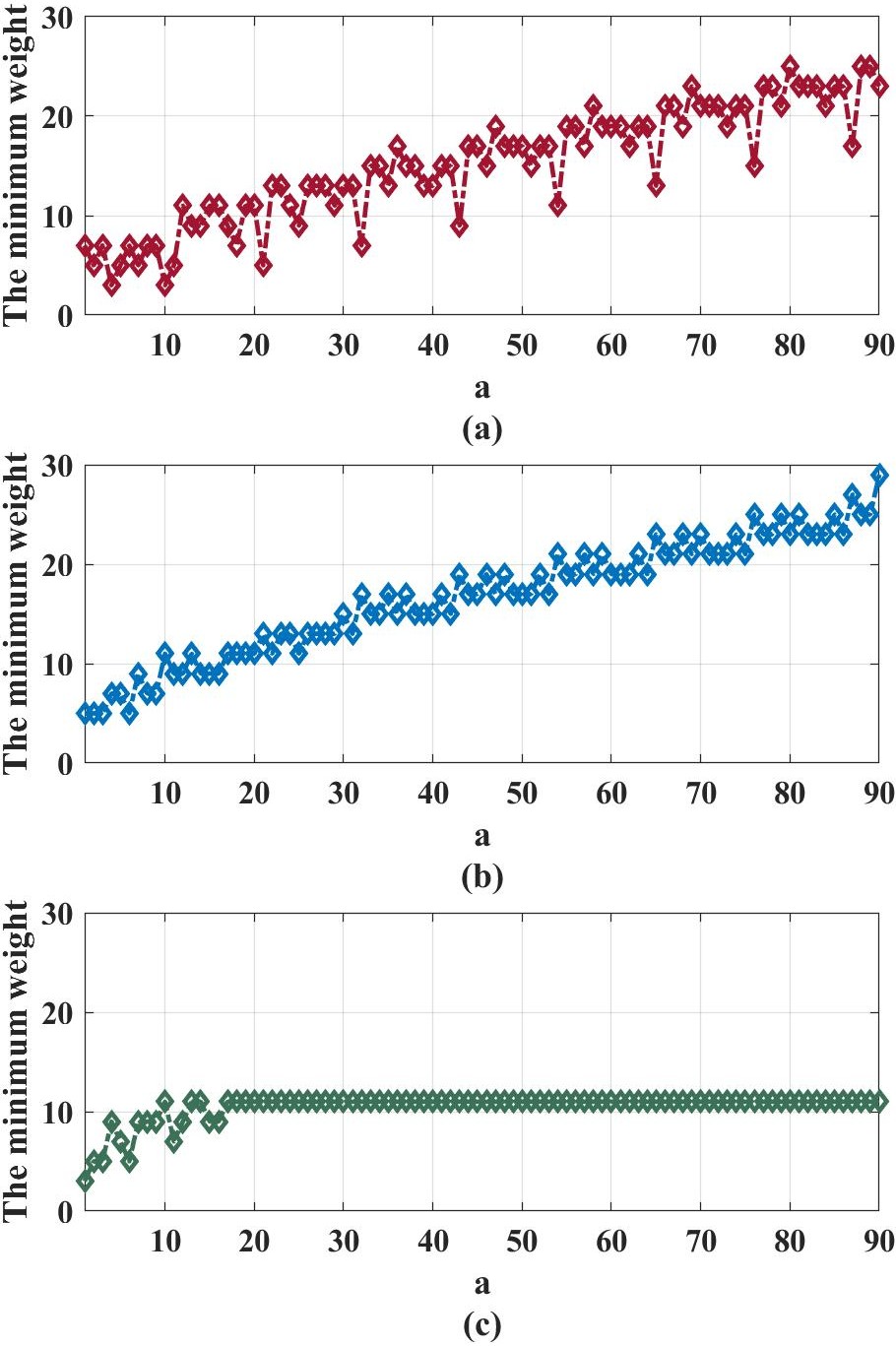}
	\caption{The minimum weight of logical (a) $X$, (b) $Z$, (c) $Y$ operators determined by Monte Carlo method when $b=3$ and $a$ taking value from 1 to 90.}
	\label{minimu_weight}
\end{figure}

\begin{figure}
	\centering
	\includegraphics[width=0.35\textwidth]{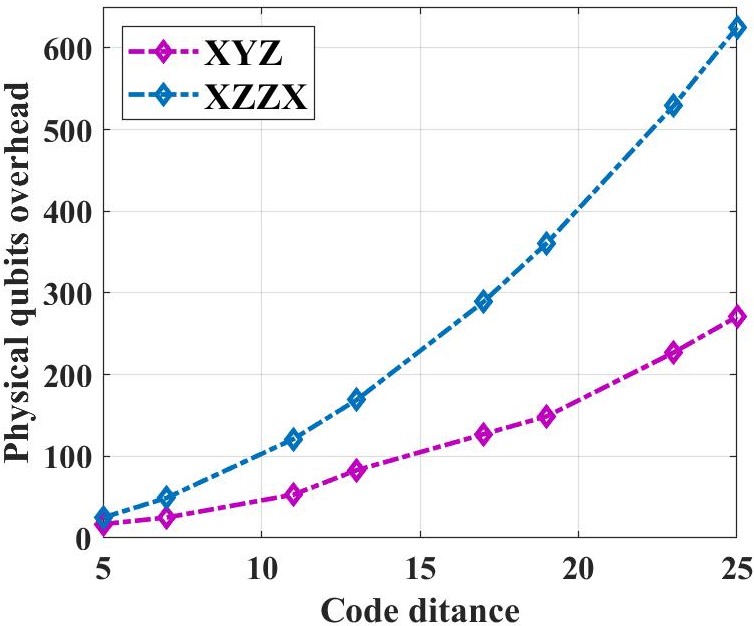}
	\caption{The physical qubits overhead of XYZ cyclic codes and XZZX surface codes with the same code distance.}
	\label{physical_qubits_overhead}
\end{figure}

\begin{figure*}
	\centering
	\includegraphics[width=0.75\textwidth]{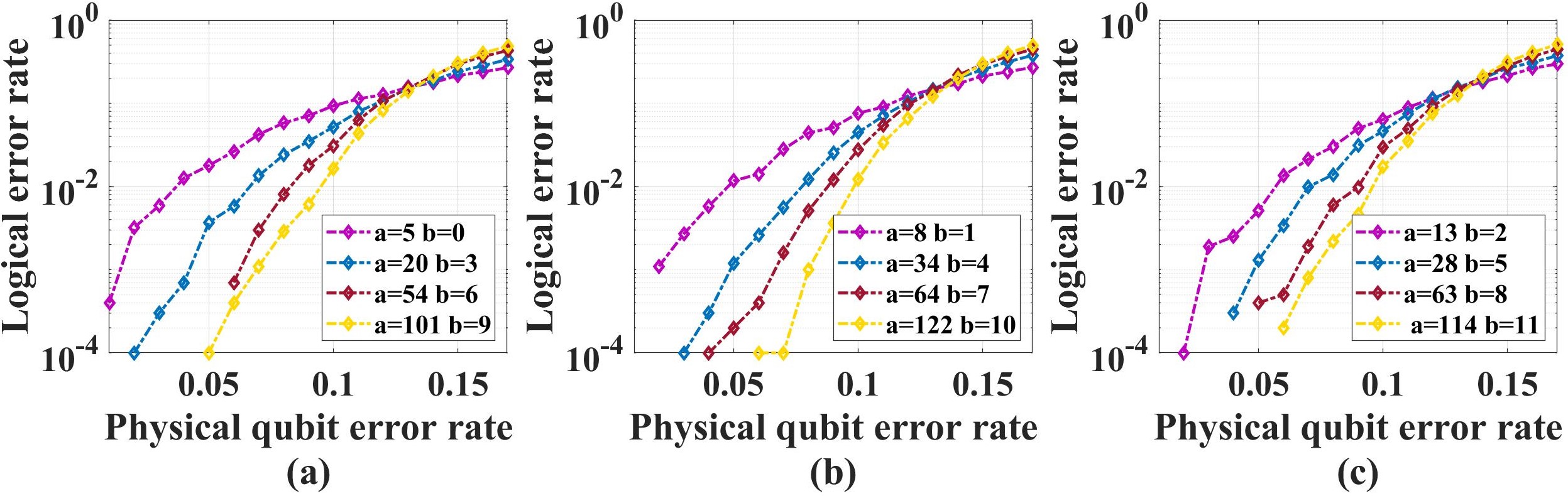}
	\caption{The error-correcting performance of the quantum XYZ cyclic codes against depolarizing noise. (a) $b\ mod\ 3 = 0$, (b) $b\ mod\ 3 = 1$, (c) $b\ mod\ 3 = 2$.}
	\label{dep_code-threshold}
\end{figure*}

\begin{figure*}
	\centering
	\begin{minipage}{0.75\linewidth}
		\centering
		\includegraphics[width=1\linewidth]{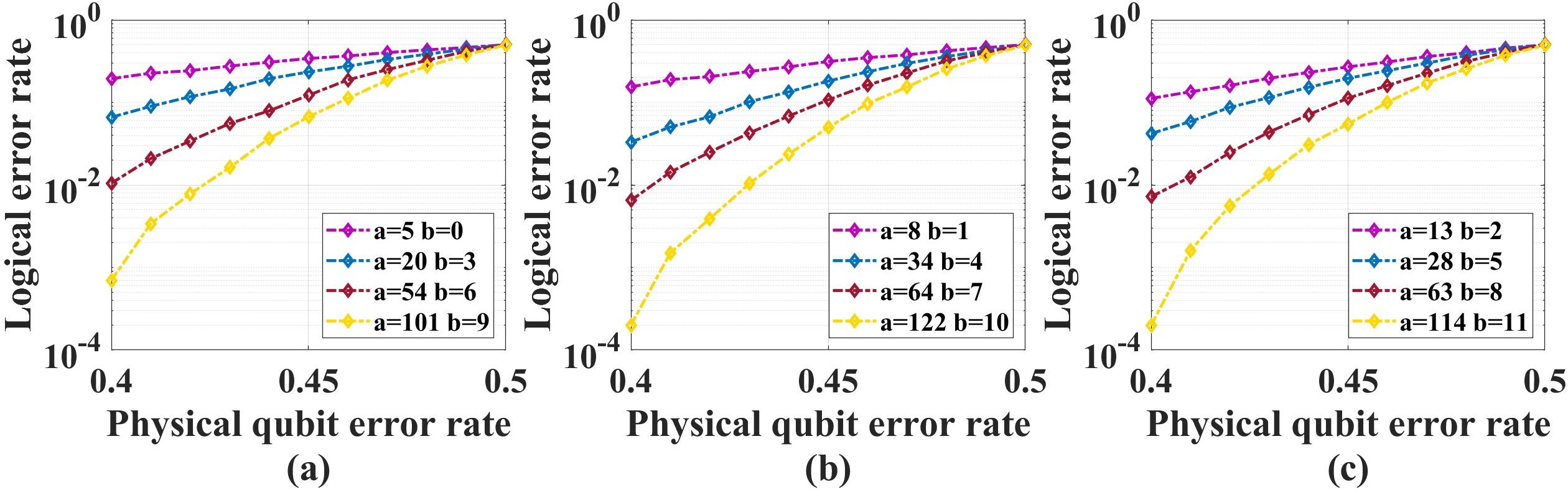}
	\end{minipage}
	\begin{minipage}{0.75\linewidth}
		\centering
		\includegraphics[width=1\linewidth]{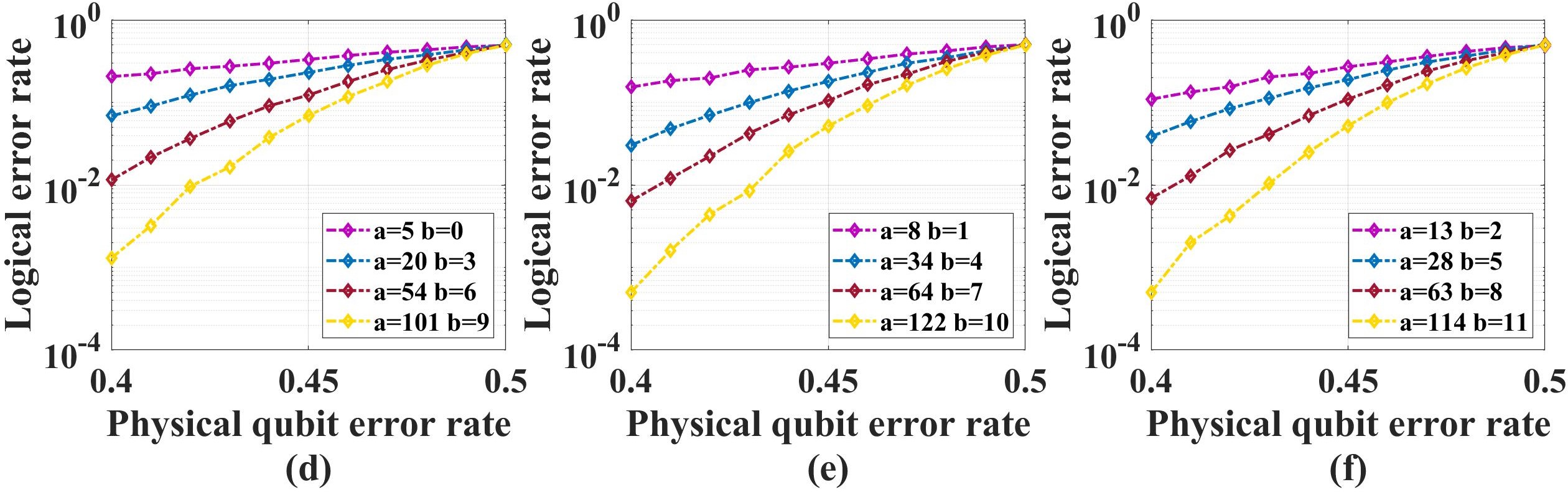}
	\end{minipage}
	\begin{minipage}{0.75\linewidth}
		\centering
		\includegraphics[width=1\linewidth]{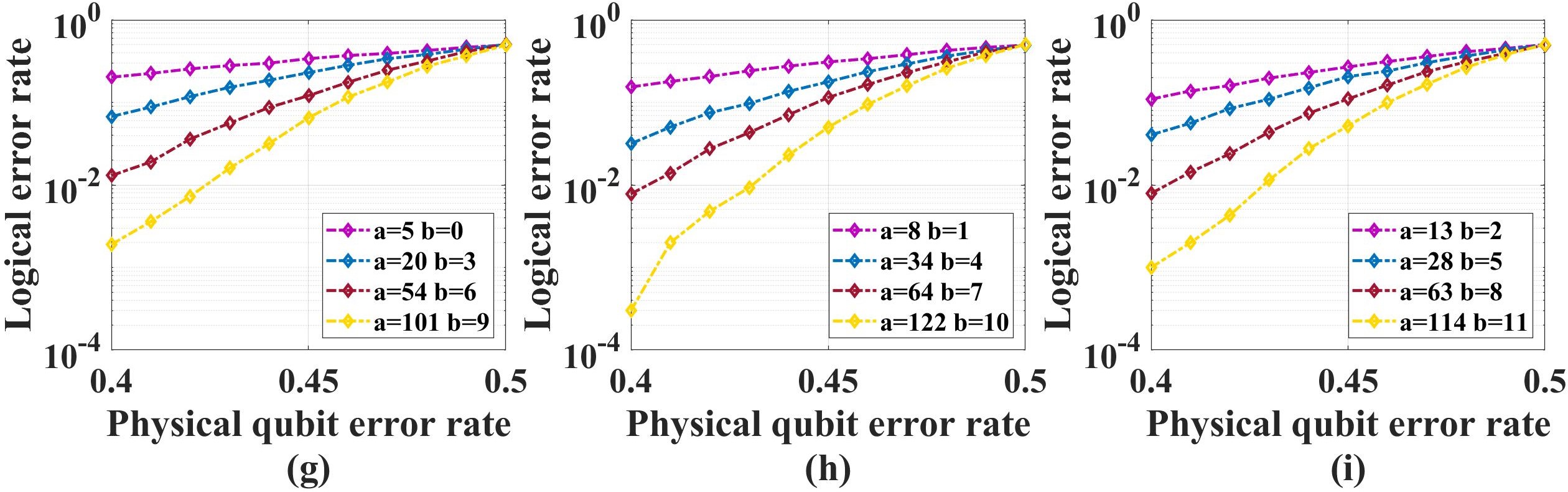}
	\end{minipage}
	\caption{The error-correcting performance of the quantum XYZ cyclic codes against infinitely biased Pauli noise. (a) $b\ mod\ 3 = 0$, infinitely biased Pauli $X$ noise. (b) $b\ mod\ 3 = 1$, infinitely biased Pauli $X$ noise. (c) $b\ mod\ 3 = 2$, infinitely biased Pauli $X$ noise. (d) $b\ mod\ 3 = 0$, infinitely biased Pauli $Z$ noise. (e) $b\ mod\ 3 = 1$, infinitely biased Pauli $Z$ noise. (f) $b\ mod\ 3 = 2$, infinitely biased Pauli $Z$ noise. (g) $b\ mod\ 3 = 0$, infinitely biased Pauli $Y$ noise. (h) $b\ mod\ 3 = 1$, infinitely biased Pauli $Y$ noise. (i) $b\ mod\ 3 = 2$, infinitely biased Pauli $Y$ noise.}
	\label{Pure_code-threshold}
\end{figure*}

\begin{figure*}
	\centering
	\includegraphics[width=0.75  \textwidth]{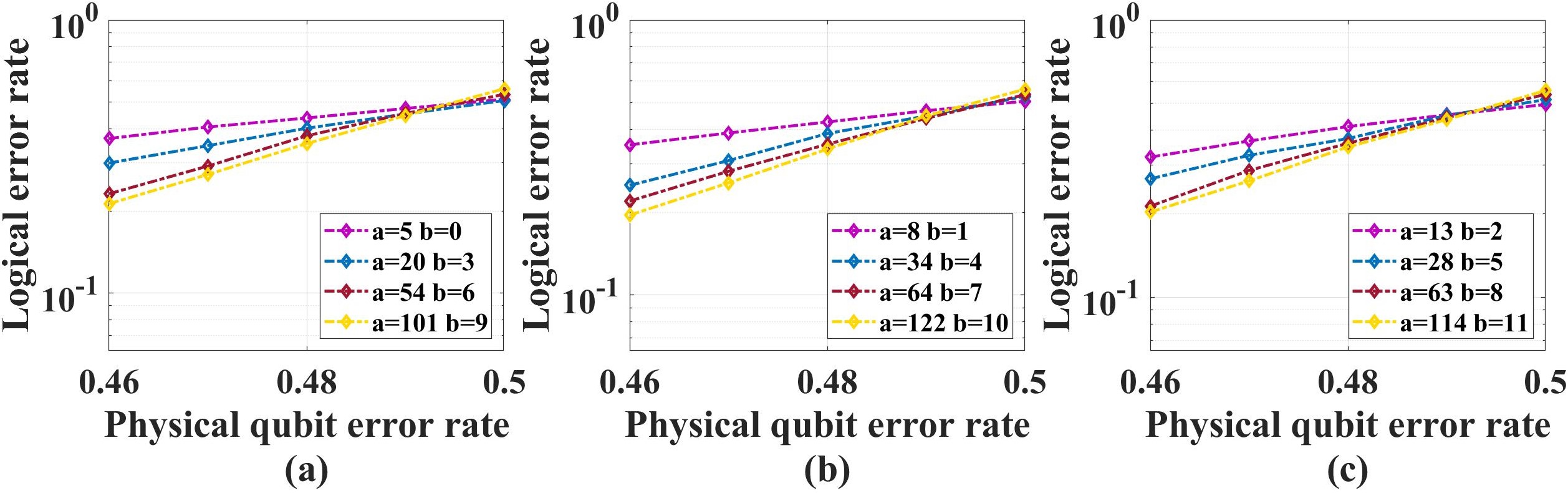}
	\caption{The error-correcting performance of the quantum XYZ cyclic codes against Pauli Z biased noise with biased rate $\eta_Z=1000$. (a) $b\ mod\ 3 = 0$, (b) $b\ mod\ 3 = 1$, (c) $b\ mod\ 3 = 2$.}
	\label{Eta1000_code-threshold}
\end{figure*}

\subsection{Code distance}
\label{Code distance}
We exploit Algorithm \ref{alg1} to study the minimum weight of logical $X$, $Z$ and $Y$ operators for given $a$ and $b$. Our results show that, when $b$ takes a fixed value, the minimum weight of logical $X$ and $Z$ operators trends to increase linearly with the increase of $a$, while that of logical $Y$ operator converge to a constant ($2b+5$ or $2b+3$) with the increase of $a$. Here, we exhibit the results of $b=3$ and a taking value from 1 to 90 in Fig. \ref{minimu_weight} (a)$\sim$(c). The minimum weight of logical $X$ and $Z$ operators trends to increase in Fig. \ref{minimu_weight} (a) and (b) respectively, while that of logical $Y$ operator converge to $2b+5=11$ in Fig. \ref{minimu_weight} (c). Moreover, one can see that, when $a=10,\ 21,\ 32,\ \cdots$, the corresponding minimum weight of logical $X$ operators is $3,\ 5,\ 7,\ \cdots$, which is consistent with \textbf{Proposition} \ref{The minimum weight of logical X operator}.
\subsection{Physical qubits overhead and error-correcting performance}
\label{Error-correcting performance}
When $b$ takes a fixed value, our goal is to identify the minimum value $a$, such that  the corresponding quantum XYZ cyclic code $C(a,b)$ encoding one logical qubit should satisfy the following criteria.

1. $C(a,b)$ should simultaneously has repetition code structure under three types of pure Pauli noise.

2. The code distance of $C(a,b)$ is $2b+5$ ($b\ mod\ 3\neq2$) or $2b+3$ ($b\ mod\ 3=2$).

We refer to the code as the optimal quantum XYZ cyclic code, since it intuitively has good error-correcting performance against both depolarizing noise and biased noise.

For a given fixed $b$, we exploit Algorithm \ref{alg1} to identify the minimum value $a$ and the corresponding minimum weight of three logical operators. Table \ref{The optimal quantum XYZ cyclic codes} shows the results, based on which we compare the physical qubits overhead between the optimal quantum XYZ cyclic codes and XZZX surface codes in Fig. \ref{physical_qubits_overhead}. One can see that, when reaching the same code distance, the physical qubit overhead of quantum XYZ cyclic code is much less than that of the XZZX surface code.
\begin{table}[htbp]
	\begin{center}
		\caption{The optimal quantum XYZ cyclic codes}
		\label{The optimal quantum XYZ cyclic codes}	
		\begin{tabular}{p{1cm}|p{1cm}|c|p{1cm}|p{1cm}|p{1cm}}
			\hline
			$b$ & \ $a$ & \ Code length & $d_X$ & $d_Z$ & $d_Y$\\
			\hline
			0 & 5 & 17 & 5 & 5 & 5 \\
			\hline
			1 & 8 & 25 & 7 & 7 & 7 \\
			\hline
			2 & 13 & 37 & 9 & 9 & 7 \\
			\hline
			3 & 20 & 53 & 11 & 11 & 11 \\
			\hline
			4 & 34 & 83 & 15 & 13 & 13 \\
			\hline
			5 & 28 & 73 & 15 & 13 & 13 \\
			\hline
			6 & 54 & 127 & 19 & 17 & 17 \\
			\hline
			7 & 64 & 149 & 21 & 19 & 19 \\
			\hline
			8 & 63 & 149 & 21 & 19 & 19 \\
			\hline
			9 & 101 & 227 & 29 & 23 & 23 \\
			\hline
			10 & 122 & 271 & 27 & 29 & 25 \\
			\hline
			11 & 114 & 257 & 27 & 27  & 25 \\
			\hline
		\end{tabular}
	\end{center}
\end{table}

According to Eq. (\ref{dup_y}), the optimal quantum XYZ cyclic codes should be divided to three classes based on the value of $b\ mod\ 3$, thus we perform simulation on the three classes of codes separately. 

Fig. \ref{dep_code-threshold} (a)$\sim$(c) show the error-correcting performance of three classes of the optimal quantum XYZ cyclic codes in table \ref{The optimal quantum XYZ cyclic codes} against depolarizing noise. One can observe apparent intersections in these three figures, which are all around $13\%$.

Under infinitely biased noise, simulation results show the code-capacity threshold of three classes of the optimal quantum XYZ cyclic codes in table \ref{The optimal quantum XYZ cyclic codes} all achieve $50\%$ as shown in Fig. \ref{Pure_code-threshold} (a)$\sim$(i), which are consistent with the proof in Sect. \ref{repetition code structure}

In the finite-bias regime, we consider the noise biased towards Pauli $Z$ errors with bias rate $\eta_Z=1000$, the reason is that in some quantum computing architectures, Pauli $Z$ errors is stronger than all other types of Pauli errors by a factor of nearly $10^3$\cite{Aliferis_2009}. Our simulation results show that three classes of the optimal quantum XYZ cyclic codes all achieve around $49\%$ code-capacity threshold, as shown in Fig. \ref{Eta1000_code-threshold} (a)$\sim$(c).

\section {Conclusion}
\label{conclusion}

In this paper, we propose a family of quantum XYZ cyclic codes. Through theoretical analysis and Monte Carlo method based on FDBP-OSD, we show that the code distance of quantum XYZ cyclic codes can increase with code length. To our best knowledge, XYZ cyclic code is the only code whose code distance shows the  characteristic of  increasing with code length. Further, we  exploit Monte Carlo method to identify a set of the optimal quantum XYZ cyclic codes which simultaneusly satisfy the conditions of having repetition code structure under all three types of pure Pauli noise and have the minimum code length for given code distance. We show that the physical qubit overhead of quantum XYZ cyclic code is much less than that of the XZZX surface code when reaching the same code distance.

Our simulation results show that the optimal quantum XYZ cyclic codes have $50\%$ code-capacity thresholds for all three types of pure Pauli noise and around $13\%$ code-capacity threshold for depolarizing noise. In the finite-bias regime, since in some quantum computing architectures, Pauli $Z$ errors is stronger than all other types of Pauli errors by a factor of nearly $10^3$\cite{Aliferis_2009}, we consider the noise biased towards Pauli $Z$ errors with noise bias ratios $\eta_Z=1000$, and the corresponding code-capacity threshold is around $49\%$.

It should be notice that there are still some open questions. The first one is how to strictly prove the minimum weight of logical $Z$ operator increase with code length. The second one is, for given $b$, when $a$ tending to infinite, how to prove there is no logical $Y$ operator whose weight is less than $d_Y^{up}$.
\subsection*{Acknowledgements}
This work is supported by the Colleges and Universities Stable Support Project of Shenzhen, China (No.GXWD20220817164856008), Guangdong Provincial Key Laboratory of Novel Security Intelligence Technologies (2022B1212010005), the Colleges and Universities Stable Support Project of Shenzhen, China (No.GXWD20220811170225001) and Harbin Institute of Technology, Shenzhen - SpinQ quantum information Joint Research Center Project (No.HITSZ20230111).

\section*{Data Availability}
All data and Materials are available with the corresponding author on reasonable request.

\appendix

\section{The proof of Proposition \ref{The minimum weight of logical X operator}}
\label{Proof of X}
\begin{proof}
	Notice that $\hat{X}_L$ must anti-commute with $Z$-type and $Y$-type logical operators no matter what value of l is. Thus, to prove $\hat{X}_L$ is the logical $X$ operator of the minimum-weight form, we first prove that $\hat{X}_L$ commutes with all stabilizers (namely, proving $\hat{X}_L$ is a logical $X$ operator), then we prove it is of the minimum-weight form.
	
	The binary representation of $\hat{X}_L$ is
	\begin{equation}
		(L_X^x|L_X^z)=(\underbrace{\underbrace{\underbrace{0\cdots0}_{b+2}\ 1\ \underbrace{0\cdots0}_{b+2}}\cdots\underbrace{\underbrace{0\cdots0}_{b+2}\ 1\ \underbrace{0\cdots0}_{b+2}}}_{2l+1}|\textbf{0}_{1\times N})
	\end{equation}
	and the parity-check matrix over $\mathbb{F}_{2}$ of $C(a,b)$ is $H=(H_x\mid H_z)$, where
	\begin{equation}
		\begin{aligned}
			H_z&=J^{b+1}+J^{(2l+1)\left(b+2\right)+l-1}\\
			&+J^{\left(2l+1\right)\left(b+2\right)+l+1}+J^{\left(4l+1\right)\left(b+2\right)+l+2}
		\end{aligned}
	\end{equation}
	Then we have
	\begin{equation}
		L_X^xH_z^T+L_X^zH_X^T=\ L_X^xH_z^T+\mathbf{0}_{1\times N}H_X^T=\mathbf{0}_{1\times N}
	\end{equation}
	which means $\hat{X}_L$ commutes with all stabilizers.
	
	Proving $\hat{X}_L$ is of the minimum-weight form is equivalent to proving the following stabilizer
	\begin{equation}
		{\hat{S}}^l=\underbrace{\underbrace{\underbrace{X\cdots X}_{b+2}\ I\ \underbrace{X\cdots X}_{b+2}}\cdots\underbrace{\underbrace{X\cdots X}_{b+2}\ I\ \underbrace{X\cdots X}_{b+2}}}_{2l+1}
	\end{equation}
	which is generated by the product of some $XZYYZX$ type stabilizers, contains the most single-qubit Pauli $X$ operators, then the product of $\hat{S}^l$ and $X$-type logical operator $X_L$ is $\hat{X}_L$. Next, we prove it by mathematical induction.
	
	Let $\textbf{\emph{H}}^l$ denote the parity-check matrix over $\mathbb{F}_{4}$ of $C(a,b)$, and $Vec(S)$ denote the row vector representation over $\mathbb{F}_{4}$ of Pauli operator $S$.
	
	1. \textbf{Base case}
	
	When $l=1$ and $a=2\left(b+2\right)$, the code length $N=3[2\left(b+2\right)+1]$, and it can be verified that any operator of weight 1 or 2 must not be logical operator, since these operators must anti-commute with some stabilizers. Thus, when $l=1$, $\underbrace{I\cdots I}_{b+2}\ X\ \underbrace{I\cdots I}_{b+2}\underbrace{I\cdots I}_{b+2}\ X\ \underbrace{I\cdots I}_{b+2}\underbrace{I\cdots I}_{b+2}\ X\ \underbrace{I\cdots I}_{b+2}$ is logical $X$ operator with the minimum-weight form and according to Lemma \ref{Lemma1}, there exists a binary row vector $\textbf{\emph{y}}^1$ of size $2\left(b+2\right)+1$, which corresponds to a method of selecting stabilizer generators at intervals of $2\left(b+2\right)+1$, such that the resulting stabilizer is ${\hat{S}}^1=\underbrace{X\cdots X}_{b+2}\ I\ \underbrace{X\cdots X}_{b+2}\underbrace{X\cdots X}_{b+2}\ I\ \underbrace{X\cdots X}_{b+2}\underbrace{X\cdots X}_{b+2}\ I\ \underbrace{X\cdots X}_{b+2}$, namely,
	\begin{equation}
		\begin{aligned}
			&Vec\left({\hat{S}}^1\right) = \left( \textbf{\emph{y}}^1, \textbf{\emph{y}}^1, \textbf{\emph{y}}^1 \right) \textbf{\emph{H}}^1\\
			& = \left( \textbf{\emph{y}}^1, \textbf{\emph{y}}^1, \textbf{\emph{y}}^1 \right) \left( \textbf{\emph{H}}_1^1, \textbf{\emph{H}}_2^1, \textbf{\emph{H}}_3^1 \right) \\
			&= \left( \textbf{\emph{y}}^1, \textbf{\emph{y}}^1, \textbf{\emph{y}}^1 \right)
			\begin{pmatrix}
				\textbf{\emph{H}}_{1,1}^1 & \textbf{\emph{H}}_{1,2}^1 & \textbf{\emph{H}}_{1,3}^1 \\
				\textbf{\emph{H}}_{1,2}^1 & \textbf{\emph{H}}_{1,1}^1 & \textbf{\emph{H}}_{1,3}^1 \\
				\textbf{\emph{H}}_{1,3}^1 & 
                \textbf{\emph{H}}_{1,2}^1 & \textbf{\emph{H}}_{1,1}^1
			\end{pmatrix} \\
			&= \left( \textbf{\emph{y}}^1 \sum_{k=1}^{3} \textbf{\emph{H}}_{1,k}^1, \textbf{\emph{y}}^1 \sum_{k=1}^{3} \textbf{\emph{H}}_{1,k}^1, \textbf{\emph{y}}^1 \sum_{k=1}^{3} \textbf{\emph{H}}_{1,k}^1 \right) \\
			&= \underbrace{1 \cdots 1}_{b+2} \ 0 \ \underbrace{1 \cdots 1}_{b+2} \ \underbrace{1 \cdots 1}_{b+2} \ 0 \ \underbrace{1 \cdots 1}_{b+2} \ \underbrace{1 \cdots 1}_{b+2} \ 0 \ \underbrace{1 \cdots 1}_{b+2}
		\end{aligned}
	\end{equation}
	which contains the most single-qubit Pauli $X$ operator.
	
	2. \textbf{Inductive Step}
	
	When $l=m$, assume the stabilizer ${\hat{S}}^m=\underbrace{\underbrace{\underbrace{X\cdots X}_{b+2}\ I\ \underbrace{X\cdots X}_{b+2}}\cdots\underbrace{\underbrace{X\cdots X}_{b+2}\ I\ \underbrace{X\cdots X}_{b+2}}}_{2m+1}$  contains the most single-qubit Pauli $X$ operators. According to Lemma \ref{Lemma1}, there exists a binary row vector $\textbf{\emph{y}}^m$ of size $2\left(b+2\right)+1$ such that
	\begin{equation}
		\begin{aligned}
			&Vec\left({\hat{S}}^m\right) = ( \underbrace{\textbf{\emph{y}}^m, \cdots, \textbf{\emph{y}}^m}_{2m+1} ) \textbf{\emph{H}}^m \\
			&=  ( \underbrace{\textbf{\emph{y}}^m, \cdots, \textbf{\emph{y}}^m}_{2m+1} ) \left( \textbf{\emph{H}}_1^m, \textbf{\emph{H}}_2^m,\cdots, \textbf{\emph{H}}_{2m+1}^m \right) \\
			&= ( \underbrace{\textbf{\emph{y}}^m, \cdots, \textbf{\emph{y}}^m}_{2m+1} )
			\begin{pmatrix}
				\textbf{\emph{H}}_{1,1}^m & \textbf{\emph{H}}_{1,2m+1}^m & \cdots &\textbf{\emph{H}}_{1,2}^m \\
				\textbf{\emph{H}}_{1,2}^m & \textbf{\emph{H}}_{1,1}^m & \cdots& \textbf{\emph{H}}_{1,3}^m \\ \vdots&\vdots&\ddots&\vdots \\
				\textbf{\emph{H}}_{1,2m+1}^m & \textbf{\emph{H}}_{1,2m}^m & \cdots& \textbf{\emph{H}}_{1,1}^m
			\end{pmatrix} \\
			&= \left(\underbrace{ \textbf{\emph{y}}^m \sum_{i=1}^{2m+1} \textbf{\emph{H}}_{1,i}^m, \cdots, \textbf{\emph{y}}^m \sum_{i=1}^{2m+1} \textbf{\emph{H}}_{1,i}^m }_{2m+1}\right)\\
			&=\underbrace{ \underbrace{\underbrace{1 \cdots 1}_{b+2}0\underbrace{1 \cdots 1}_{b+2}}\cdots  \underbrace{\underbrace{1 \cdots 1}_{b+2}0 \underbrace{1 \cdots 1}_{b+2}}}_{2m+1}
		\end{aligned}
	\end{equation}
	
	Observe that
	\begin{equation}
		\left\{
		\begin{aligned}
			& \textbf{\emph{H}}_{1,1}^m=\textbf{\emph{H}}_{1,1}^1 \\
			& \textbf{\emph{H}}_{1,2}^m+\textbf{\emph{H}}_{1,m+1}^m=\textbf{\emph{H}}_{1,2}^1 \\
			& \textbf{\emph{H}}_{1,m+2}^m+\textbf{\emph{H}}_{1,2m+1}^m=\textbf{\emph{H}}_{1,3}^1 \\
			& \textbf{\emph{H}}_{1,3}^m=\cdots=\textbf{\emph{H}}_{1,m}^m=\textbf{\emph{H}}_{1,m+3}^m=\cdots=\textbf{\emph{H}}_{1,2m}^m=\mathbf{0}
		\end{aligned}
		\right.
	\end{equation}
	then we have
	\begin{equation}
		\begin{aligned}
			&\textbf{\emph{y}}^m\sum_{i=1}^{2m+1}\textbf{\emph{H}}_{1,i}^m\\
			&=\textbf{\emph{y}}^m\textbf{\emph{H}}_{1,1}^m+\textbf{\emph{y}}^m(\textbf{\emph{H}}_{1,2}^m+\textbf{\emph{H}}_{1,m+1}^m)+\textbf{\emph{y}}^m\sum_{i=3}^{m}\textbf{\emph{H}}_{1,i}^m\\
			&+\textbf{\emph{y}}^m\sum_{i=m+3}^{2m}\textbf{\emph{H}}_{1,i}^m+\textbf{\emph{y}}^m(\textbf{\emph{H}}_{1,m+2}^m+\textbf{\emph{H}}_{1,2m+1}^m)\\
			&=\textbf{\emph{y}}^m(\textbf{\emph{H}}_{1,1}^m+\textbf{\emph{H}}_{1,2}^m+\textbf{\emph{H}}_{1,m+1}^m+\textbf{\emph{H}}_{1,m+2}^m+\textbf{\emph{H}}_{1,2m+1}^m)\\
			&=\ \textbf{\emph{y}}^1\sum_{i=1}^{3}\textbf{\emph{H}}_{1,i}^1
		\end{aligned}
	\end{equation}
	and
	\begin{equation}
		\textbf{\emph{y}}^m=\textbf{\emph{y}}^1
	\end{equation}
	which indicates two methods of selecting stabilizer generators at intervals of $2\left(b+2\right)+1$ corresponding to $\textbf{\emph{y}}^1$ and $\textbf{\emph{y}}^m$ are the same.
	
	When $l=m+1$, we have
	\begin{equation}
		\begin{aligned}
			\textbf{\emph{H}}^{m+1}&=\begin{pmatrix}
				\textbf{\emph{H}}_{1,1}^{m+1} & \textbf{\emph{H}}_{1,2m+3}^{m+1} & \cdots &\textbf{\emph{H}}_{1,2}^{m+1} \\
				\textbf{\emph{H}}_{1,2}^{m+1} & \textbf{\emph{H}}_{1,1}^{m+1} & \cdots& \textbf{\emph{H}}_{1,3}^{m+1} \\ \vdots&\vdots&\ddots&\vdots \\
				\textbf{\emph{H}}_{1,2m+3}^{m+1} & \textbf{\emph{H}}_{1,2m+2}^{m+1} & \cdots& \textbf{\emph{H}}_{1,1}^{m+1}
			\end{pmatrix}\\
			&=\left(\textbf{\emph{H}}_1^{m+1},\ \textbf{\emph{H}}_2^{m+1},\ \cdots,\textbf{\emph{H}}_{2m+3}^{m+1}\right)
		\end{aligned}
	\end{equation}
	Observe that
	\begin{equation}
		\left\{
		\begin{aligned}
			& \textbf{\emph{H}}_{1,1}^{m+1}=\textbf{\emph{H}}_{1,1}^m=\textbf{\emph{H}}_{1,1}^1 \\
			& \textbf{\emph{H}}_{1,2}^{m+1}+\textbf{\emph{H}}_{1,m+2}^{m+1}=\textbf{\emph{H}}_{1,2}^m+\textbf{\emph{H}}_{1,m+1}^m=\textbf{\emph{H}}_{1,2}^1 \\
			& \textbf{\emph{H}}_{1,m+3}^{m+1}+\textbf{\emph{H}}_{1,2m+3}^{m+1}=\textbf{\emph{H}}_{1,m+2}^m+\textbf{\emph{H}}_{1,2m+1}^m=\textbf{\emph{H}}_{1,3}^1 \\
			& \textbf{\emph{H}}_{1,3}^{m+1}=\cdots=\textbf{\emph{H}}_{1,m+1}^{m+1}=\textbf{\emph{H}}_{1,m+4}^{m+1}=\cdots=\textbf{\emph{H}}_{1,2m+2}^{m+1}=\mathbf{0}
		\end{aligned}
		\right.
	\end{equation}
	and transform $\textbf{\emph{H}}^{m+1}$ as
	\begin{equation}
		{\textbf{\emph{H}}^\prime}^{m+1}=({\textbf{\emph{H}}^\prime}^m,\textbf{\emph{H}}_{m+1}^{m+1},\textbf{\emph{H}}_{2m+2}^{m+1})
	\end{equation}
	where 
	\begin{equation}
		\begin{aligned}
			{\textbf{\emph{H}}^\prime}^m&=(\textbf{\emph{H}}_1^{m+1},\ \textbf{\emph{H}}_2^{m+1},\ \cdots,\textbf{\emph{H}}_m^{m+1},\textbf{\emph{H}}_{m+2}^{m+1},\textbf{\emph{H}}_{m+3}^{m+1},\\
			&\cdots,\textbf{\emph{H}}_{2m+1}^{m+1},\textbf{\emph{H}}_{2m+3}^{m+1})
		\end{aligned}
	\end{equation}
	Since $\textbf{\emph{H}}_{1,m+1}^{m+1}=\textbf{\emph{H}}_{1,2m+2}^{m+1}=\textbf{0}$, we have
	\begin{equation}
		\begin{aligned}
			&\textbf{\emph{y}}^m\sum_{i=1}^{2m+1}\textbf{\emph{H}}_{1,i}^m
			=\textbf{\emph{y}}^m\left(\textbf{\emph{H}}_{1,m+1}^{m+1}+\textbf{\emph{H}}_{1,2m+2}^{m+1}\right)+\textbf{\emph{y}}^m\sum_{i=1}^{2m+1}\textbf{\emph{H}}_{1,i}^m\\
			&=\textbf{\emph{y}}^m\sum_{i=1}^{2m+3}\textbf{\emph{H}}_{1,i}^{m+1}\\
			&=(\underbrace{\textbf{\emph{y}}^m,\cdots,\textbf{\emph{y}}^m}_{2m+3})\textbf{\emph{H}}_1^{m+1}\\
			&=\underbrace{1\cdots1}_{b+2}\ 0\ \underbrace{1\cdots1}_{b+2}
		\end{aligned}
	\end{equation}
	which indicates that the row vector generated by selecting some rows of ${\textbf{\emph{H}}}^{\prime m}$ have $2(2m+1)(b+2)\ \ 1$s at most and the corresponding selecting method is $(\underbrace{\textbf{\emph{y}}^m,\cdots,\textbf{\emph{y}}^m}_{2m+3})$. Thus, in this case, for the row vector generated by selecting some rows of matrix $\left(\textbf{\emph{H}}_{m+1}^{m+1},\textbf{\emph{H}}_{2m+2}^{m+1}\right)$, we have
	\begin{equation}
		\begin{aligned}
			(\underbrace{\textbf{\emph{y}}^m,\cdots,\textbf{\emph{y}}^m}_{2m+3})\left(\textbf{\emph{H}}_{m+1}^{m+1},\textbf{\emph{H}}_{2m+2}^{m+1}\right)\\
			=\underbrace{1\cdots1}_{b+2}\ 0\ \underbrace{1\cdots1}_{b+2}\underbrace{1\cdots1}_{b+2}\ 0\ \underbrace{1\cdots1}_{b+2}
		\end{aligned}
	\end{equation}
	which has $2(b+2)$ 1s at most. Thus, the row vector generated by selecting some rows of $\textbf{\emph{H}}^{m+1}$ have $2(2m+3)(b+2)\ \ 1$s at most, which means the corresponding stabilizer $\hat{S}^{m+1}=\underbrace{\underbrace{\underbrace{X\cdots X}_{b+2}\ I\ \underbrace{X\cdots X}_{b+2}}\cdots\underbrace{\underbrace{X\cdots X}_{b+2}\ I\ \underbrace{X\cdots X}_{b+2}}}_{2m+3}$ contains the most single-qubit Pauli $X$ operators. Thus, the product of ${\hat{S}}^{m+1}$ and $X$-type logical operator $X_L$ is $\underbrace{\underbrace{\underbrace{I\cdots I}_{b+2}\ X\ \underbrace{I\cdots I}_{b+2}}\cdots\underbrace{\underbrace{I\cdots I}_{b+2}\ X\ \underbrace{I\cdots I}_{b+2}}}_{2m+3}$, which is of minimum weight $2(m+1)+1$, and the proof is completed.
\end{proof}

\section{The proof of Proposition \ref{The upper bound of the minimum weight of the logical Y operator}}
\label{Proof of Y}
\begin{proof}
Here, we define a bitwise multiplication operator $\star$ of two matrices $A$ and $B$, and $A\star B$ represents that each row of $A$ and $B$ is performed bitwise multiplication and this operator is used in the following proof.

As mentioned in Sect. \ref{The minimum weight of logical X and Y operators}, we only consider the quantum XYZ cyclic codes encoding one logical qubit, thus there is one redundant stabilizer generator in $\mathcal{H}$. To facilitate the analysis, we remove the first stabilizer from $\mathcal{H}$ and obtain the following stabilizer generator matrix $\mathcal{H}^\ast$:
	\begin{equation}\label{H*}
            \begin{aligned}
		\mathcal{H}^\ast=\begin{pmatrix}
			XX\underbrace{I\cdots I}_b Z\underbrace{I\cdots I}_a Y IY\underbrace{I\cdots I}_a Z\underbrace{I\cdots I}_b\\
			IXX\underbrace{I\cdots I}_b Z\underbrace{I\cdots I}_a YIY\underbrace{I\cdots I}_a Z\underbrace{I\cdots I}_{b-1}\\
			\vdots\\
			\underbrace{I\cdots I}_b Z\underbrace{I\cdots I}_a YIY\underbrace{I\cdots I}_a Z\underbrace{I\cdots I}_b XX
		\end{pmatrix}
         \end{aligned}
	\end{equation}
	which contains $2\left(a+b\right)+6$ independent stabilizer generators. Notice that the stabilizers on the $i$th and $[2\left(a+b\right)+7-i]$th rows of Eq. (\ref{H*}) are reverses of each other.
	
	Next, we prove that when $b=3l$, selecting one stabilizer generator from $\mathcal{H}$ at intervals of 1, the weight of the resulting logical $Y$ operator $\hat{Y}_L$ is $2b+5$. While selecting three stabilizer generators from $\mathcal{H}$ intervals of 3, the weight of $\hat{Y}_L$ is $2b+5$ when $b=3l-2$ or $2b+3$ when $b=3l-1$.
	
	\textbf{Case 1:} $b=3l$.
	
	Selecting stabilizer generators from $\mathcal{H}^\ast$ at intervals of 1, one can obtain the following residual stabilizer generator matrix $\mathcal{H}^\prime$:
	\begin{equation}
		\mathcal{H}^\prime=\begin{pmatrix}
			XX\underbrace{I\cdots I}_b Z\underbrace{I\cdots I}_a YIY\underbrace{I\cdots I}_a Z\underbrace{I\cdots I}_b\\
			IIXX\underbrace{I\cdots I}_b Z\underbrace{I\cdots I}_a YIY\underbrace{I\cdots I}_a Z\underbrace{I\cdots I}_{b-2}\\
			\vdots\\
			\underbrace{I\cdots I}_{b-1}Z\underbrace{I\cdots I}_a YIY\underbrace{I\cdots I}_a Z\underbrace{I\cdots I}_b XXI
		\end{pmatrix}
	\end{equation}
	Observe that $\mathcal{H}^\prime$ can be rewritten as $\mathcal{H}^\prime=\mathcal{H}_1\star\mathcal{H}_2\star\mathcal{H}_3$, where $\mathcal{H}_1=\begin{pmatrix}
		XX\underbrace{I\ \cdots\ I}_{2a+2b+5}\\
		IIXX\underbrace{I\cdots I}_{2a+2b+3}\\
		\vdots\\
		\underbrace{I\cdots I}_{2a+2b+4}XXI
	\end{pmatrix}$, $\mathcal{H}_2 =\begin{pmatrix}
		\underbrace{I\cdots I}_{b+2}Z\underbrace{I\cdots I}_{2a+3}Z\underbrace{I\cdots I}_ b\\
		\underbrace{I\cdots I}_{b+4}Z\underbrace{I\cdots I}_{2a+3}Z\underbrace{I\cdots I}_{b-2}\\
		\vdots\\
		\underbrace{I\cdots I}_{b-1}Z\underbrace{I\cdots I}_{2a+3}Z\underbrace{I\cdots I}_{b+3}
	\end{pmatrix}$, and $\mathcal{H}_3=\begin{pmatrix}
		\underbrace{I\cdots I}_{a+b+3}YIY\underbrace{I\cdots I}_{a+b+1}\\
		\underbrace{I\cdots I}_{a+b+5}YIY\underbrace{I\cdots I}_{a+b-1}\\
		\vdots\\
		\underbrace{I\cdots I}_{a+b}YIY\underbrace{I\cdots I}_{a+b+4}
	\end{pmatrix}$. The product of all elements in $\mathcal{H}_1$, $\mathcal{H}_2$ and $\mathcal{H}_3$ are $P_1=\underbrace{X\cdots X}_{2a+2b+6}I$, $P_2=\underbrace{I\cdots I}_{b+1}\underbrace{Z\cdots Z}_{2a+4}\underbrace{I\cdots I}_{b+2}$, and $P_3=\underbrace{I\cdots I}_{a+b+2}YIIY\underbrace{I\cdots I}_{a+b+1}$, respectively. Thus, the resulting stabilizer $\hat{S}$, which is the product of $P_1$, $P_2$ and $P_3$, is
	\begin{equation}
		\hat{S}=P_1P_2P_3=\underbrace{X\cdots X}_{b+1}\underbrace{Y\cdots Y}_{a+1}IYYI\underbrace{Y\cdots Y}_{a-1}\underbrace{X\cdots X}_{b+1}I
	\end{equation}
	Since the $Y$-type logical operator is $Y_L=\underbrace{Y\cdots Y}_N$, the product of $\hat{S}$ and $Y_L$ is
	\begin{equation}
		\hat{Y}_L=\underbrace{Z\cdots Z}_{b+1}\underbrace{I\cdots I}_{a+1}YIIY\underbrace{I\cdots I}_{a-1}\underbrace{Z\cdots Z}_{b+1}Y
	\end{equation}
	whose weight is $2b+5$ and is only related to $b$.
	
	\textbf{Case 2:} $b=3l-1$ and $b=3l-2$.
	
	Since the proofs for $b=3l-1$ and $b=3l-2$ are similar, we only give the proof of $b=3l-1$ here. When $b=3l-1$, according to Proposition \ref{code dimension}, $a\ mod\ 3\neq 2$. Let $k=\left\lfloor\frac{a}{3}\right\rfloor$ (where $\left\lfloor\cdot\right\rfloor$ represents the floor operation) and assume $a \ mod\ 3=0$ (when $a \ mod\ 3=1$, the proof is the same), selecting three stabilizers from $\mathcal{H}^\ast$ at intervals of 3, the residual stabilizer generator matrix $\mathcal{H}^\prime$ is:
	\begin{equation}
		\mathcal{H}^\prime=\begin{pmatrix}
			XX\underbrace{I\cdots I}_b Z\underbrace{I\cdots I}_a YIY\underbrace{I\cdots I}_ a Z\underbrace{I\cdots I}_b\\
			IXX\underbrace{I\cdots I}_ b Z\underbrace{I\cdots I}_a YIY\underbrace{I\cdots I}_a Z\underbrace{I\cdots I}_{b-1}\\
			IIXX\underbrace{I\cdots I}_ b Z\underbrace{I\cdots I}_ a YIY\underbrace{I\cdots I}_ a Z\underbrace{I\cdots I}_{b-2}\\
			\vdots\\
			\underbrace{I\cdots I}_{b-2}Z\underbrace{I\cdots I}_a YIY\underbrace{I\cdots I}_a Z\underbrace{I\cdots I}_b XXII\\
			\underbrace{I\cdots I}_{b-1}Z\underbrace{I\cdots I}_a YIY\underbrace{I\cdots I}_a Z\underbrace{I\cdots I}_b XXI\\
			\underbrace{I\cdots I}_ b Z\underbrace{I\cdots I}_ a YIY\underbrace{I\cdots I}_ a Z\underbrace{I\cdots I}_b XX
		\end{pmatrix}
	\end{equation}
	Rewriting $\mathcal{H}^\prime$ as $\mathcal{H}^\prime=\mathcal{H}_1\star\mathcal{H}_2\star\mathcal{H}_3$, where
		$\mathcal{H}_1=\begin{pmatrix}
			XX\underbrace{I\cdots I}_{2a+2b+5} \\
			IXX\underbrace{I\cdots I}_{2a+2b+4} \\
			IIXX\underbrace{I\cdots I}_{2a+2b+3} \\
			\vdots\\
			\underbrace{I\cdots I}_{2a+2b+3} XXII\\
			  \underbrace{I\cdots I}_{2a+2b+4} XXI\\
			\underbrace{I\cdots I}_{2a+2b+5} XX
		\end{pmatrix}$,
            $\mathcal{H}_2=\begin{pmatrix}
			\underbrace{I\cdots I}_{b+2} Z \underbrace{I\cdots I}_{2a+3} Z \underbrace{I\cdots I}_ b \\
			\underbrace{I\cdots I}_{b+3} Z \underbrace{I\cdots I}_{2a+3} Z \underbrace{I\cdots I}_ {b-1}\\
			\underbrace{I\cdots I}_{b+4} Z \underbrace{I\cdots I}_{2a+3} Z \underbrace{I\cdots I}_ {b-2}\\
			\vdots\\
			\underbrace{I\cdots I}_{b-2} Z \underbrace{I\cdots I}_{2a+3} Z \underbrace{I\cdots I}_ {b+4}\\
			\underbrace{I\cdots I}_{b-1} Z \underbrace{I\cdots I}_{2a+3} Z \underbrace{I\cdots I}_ {b+3}\\
			\underbrace{I\cdots I}_{b} Z \underbrace{I\cdots I}_{2a+3} Z \underbrace{I\cdots I}_ {b+2}
		\end{pmatrix}$,and
            $\mathcal{H}_3=\begin{pmatrix}
			\underbrace{I\cdots I}_{a+b+3}YIY \underbrace{I\cdots I}_{a+b+1}\\
			\underbrace{I\cdots I}_{a+b+4}YIY \underbrace{I\cdots I}_{a+b}\\
			\underbrace{I\cdots I}_{a+b+5}YIY \underbrace{I\cdots I}_{a+b-1}\\
			\vdots\\
			\underbrace{I\cdots I}_{a+b-1}YIY \underbrace{I\cdots I}_{a+b+5}\\
			\underbrace{I\cdots I}_{a+b}YIY \underbrace{I\cdots I}_{a+b+4}\\
			\underbrace{I\cdots I}_{a+b+1}YIY \underbrace{I\cdots I}_{a+b+3}
		\end{pmatrix}$
	respectively. The product of all elements in $\mathcal{H}_1$, $\mathcal{H}_2$ and $\mathcal{H}_3$ are $P_1=\underbrace{\underbrace{XII}\cdots\underbrace{XII}}_{k+l}XIIIX\underbrace{\underbrace{IIX}\cdots\underbrace{IIX}}_{k+l}$, $P_2=\underbrace{Z\cdots Z}_{3l}\underbrace{\underbrace{ZII}\cdots\underbrace{ZII}}_{k-l+1}\underbrace{I\cdots I}_{6l-1} \underbrace{\underbrace{IIZ}\cdots\underbrace{IIZ}}_{k-l+1}\underbrace{Z\cdots Z}_{3l}$ and $P_3=\underbrace{\underbrace{IYY}\cdots\underbrace{IYY}}_{k+l} IYIYI\underbrace{\underbrace{YYI}\cdots\underbrace{YYI}}_{k+l}$, respectively. Thus, the resulting stabilizer $\hat{S}$, which is the product of $P_1$, $P_2$ and $P_3$, is
	\begin{equation}
		\begin{aligned}
		\hat{S}=&P_1P_2P_3\\
			=&\underbrace{\underbrace{YXX}\cdots\underbrace{YXX}}_{l}\underbrace{Y\cdots Y}_{3(k-l+1)}\underbrace{\underbrace{XYY}\cdots\underbrace{XYY}}_{l-1} XYIYX\\&\underbrace{\underbrace{YYX}\cdots\underbrace{YYX}}_{l-1}\underbrace{Y\cdots Y}_{3(k-l+1)}\underbrace{\underbrace{XXY}\cdots\underbrace{XXY}}_{l}
		\end{aligned}	
	\end{equation}
	Then the product of $\hat{S}$ and $Y$-type logical operator $Y_L$ is
	\begin{equation}
		\begin{aligned}
        \hat{Y}_L=&\underbrace{\underbrace{IZZ}\cdots\underbrace{IZZ}}_{l}\underbrace{I\cdots I}_{3(k-l+1)}\underbrace{\underbrace{ZII}\cdots\underbrace{ZII}}_{l-1}ZIYIZ\\&\underbrace{\underbrace{IIZ}\cdots\underbrace{IIZ}}_{l-1} \underbrace{I\cdots I}_{3(k-l+1)}\underbrace{\underbrace{ZZI}\cdots\underbrace{ZZI}}_{l}
            \end{aligned}
	\end{equation}
	whose weight is $6l+1$, which is equal to $2b+3$.
\end{proof}
\newpage
\bibliography{reference}
\end{document}